\pdfoutput=1

\documentclass[a4paper]{article}
\RequirePackage{etex}

\usepackage[left=1.5cm,right=1.5cm,top=2cm,bottom=2cm,ignoreheadfoot]{geometry}

\usepackage[utf8]{inputenc}
\usepackage{pgfplots}
\pgfplotsset{compat=newest}
\usepackage[title]{appendix}
\usepackage{amssymb}
\usepackage{booktabs}
\usepackage{textcomp}
\usepackage{graphicx}
\usepackage{physics}
\usepackage{mathtools}
\usepackage[export]{adjustbox}
\usepackage{yhmath}
\usepackage{makecell}
\usepackage{amsmath}
\usepackage{dsfont}
\usepackage{wrapfig}
\usepackage{xparse}
\usepackage{xargs,ifthen,xstring,etoolbox}
\usepackage{environ}
\usepackage{multirow}
\usepackage{array}
\usepackage{hyperref}
\usepackage{amsfonts}
\usepackage{dsfont}
\usepackage{bbm}
\usepackage{caption}
\usepackage{subcaption}
\usepackage{xspace}
\usepackage{xcolor,colortbl}
\usepackage{tcolorbox}
\usepackage{authblk}
\usepackage{soul}
\usepackage{csquotes}
\usepackage{tikz-network}
\usepackage{float}



\usepackage{doi}
\usepackage[numbers]{natbib}
\bibliographystyle{plainnat}

\usepackage{tikz}

\usetikzlibrary{patterns,calc,matrix,backgrounds,fadings,shapes,arrows,shadows,decorations.pathreplacing,shadows.blur}
\usetikzlibrary{trees}
\usetikzlibrary{quantikz2}
\usetikzlibrary{tikzmark}
\tikzset{
    gateO/.style={
        draw,
        circle,
        minimum width=1em,
        inner sep=0pt,
        fill=white,
    }
}
\DeclareExpandableDocumentCommand{\pctrl}{m}{%
  \octrl[style={scale=1}]{} \hspace{-0.5em} \adjustbox{width=1em}{#1}
}

\newcommand\restr[2]{{
    \left.\kern-\nulldelimiterspace 
    #1 
    \vphantom{\big|} 
    \right|_{#2} 
}}

\usepackage{amsthm}
\usepackage{letltxmacro}
\LetLtxMacro\amsproof\proof
\LetLtxMacro\amsendproof\endproof
\usepackage{thmbox}
\AtBeginDocument{%
  \LetLtxMacro\proof\amsproof
  \LetLtxMacro\endproof\amsendproof
}
\newtheorem{theorem}{Theorem}
\newtheorem{lemma}{Lemma}
\newtheorem{corollary}{Corollary}
\newtheorem{definition}{Definition}
\newtheorem{remark}{Remark}

\newcommand\nnze[0]{\ensuremath{|B|}}

\newcommand\lognnze[0]{\log(\nnze{})}
\newcommand\Order[1]{\ensuremath{\mathcal{O}\left(#1\right)}}

\newcommand{\RX}[1][\relax]{\ensuremath{R_{X\ifx\relax#1\relax\else,\,#1\fi}}}
\newcommand{\RY}[1][\relax]{\ensuremath{R_{Y\ifx\relax#1\relax\else,\,#1\fi}}}
\newcommand{\RZ}[1][\relax]{\ensuremath{R_{Z\ifx\relax#1\relax\else,\,#1\fi}}}
\newcommand{\RZZ}[1][\relax]{\ensuremath{R_{ZZ\ifx\relax#1\relax\else,\,#1\fi}}}

\newcommand\Id[0]{\ensuremath{\mathbb{I}}}
\newcommand\ZM[0]{\ensuremath{O}}
\newcommand{\maxkcut}[1]{MAX $#1$-CUT}

\newcommand{\Span}[1]{\operatorname{span}\!\left(#1\right)}
\newcommand{\CBU}[0]{C_{B}U}

\newcommand\x[0]{\mathbf{x}}
\newcommand\y[0]{\mathbf{y}}
\newcommand\z[0]{\mathbf{z}}
\newcommand\w[0]{\mathbf{w}}

\renewcommand\a[0]{\mathbf{a}}
\renewcommand\b[0]{\mathbf{b}}
\renewcommand\c[0]{\mathbf{c}}
\renewcommand\t[0]{\mathbf{t}}

\newcommand{\clr}[0]{\operatorname{clr}}

\newcommand{\overbar}[1]{\mkern 1.5mu\overline{\mkern-1.5mu#1\mkern-1.5mu}\mkern 1.5mu}
\newcommand{\lX}[0]{\overbar{X}}

\newcommand\Ph[0]{Ph}

\title{
Compact Circuits for Constrained Quantum Evolutions of Sparse Operators 
}

\author[$\dagger$]{Franz G. Fuchs}
\author[$\dagger$]{Ruben P. Bassa\thanks{\href{mailto:ruben.bassa@sintef.no}{ruben.bassa@sintef.no}}}

\affil[$\dagger$]{SINTEF AS, Department of Mathematics and Cybernetics, Oslo, Norway}

\date{\today}

\begin{document}
\maketitle

\begin{abstract}
We introduce a general framework for constructing compact quantum circuits that implement the real-time evolution of Hamiltonians of the form $H = \sigma P_B$, where $\sigma$ is a Pauli string commuting with a projection operator $P_B$ onto a subspace of the computational basis. Such Hamiltonians frequently arise in quantum algorithms, including constrained mixers in QAOA, fermionic and excitation operators in VQE, and lattice gauge theory applications.
Additionally, we construct transposition gates, widely used in quantum computing, that scale more efficiently than the best known constructions in literature. 
Our method emphasizes the minimization of non-transversal gates, particularly T-gates, critical for fault-tolerant quantum computing. We construct circuits requiring $\mathcal{O}(n|B|)$ CX gates and $\Order{n \nnze{} + \log(\nnze{}) \log (1/\epsilon)}$ T-gates, where $n$ is the number of qubits, $|B|$ the dimension of the projected subspace, and $\epsilon$ the desired approximation precision. For subspaces that are generated by Pauli X-orbits we further reduce complexity to $\mathcal{O}(n \log |B|)$ CX gates and $\Order{n+\log(\frac{1}{\epsilon})}$ T gates.  Our constructive proofs yield explicit algorithms and include several applications, such as improved transposition circuits, efficient implementations of fermionic excitations, and oracle operators for combinatorial optimization.
In the sparse case, i.e. when $\nnze{}$ is small, the proposed algorithms scale favourably when compared to direct Pauli evolution. 
\end{abstract}

\begin{center}
  \captionsetup{hypcap=false}        
    \centering
    \begin{tikzpicture}[edge from parent fork down]
    \tikzset{level 1/.style={sibling distance=9cm, level distance = 1cm}},
    \node { $\sigma=i^{\a \cdot \b} X^\a Z^\b$}
    child [level 2/.style={sibling distance=4.5cm, level distance = 1cm},
           ] {
        node[yshift=.0cm] {$X^\a=I$}
        child [level 3/.style={sibling distance=2.0cm, level distance=1.75cm},
           ] {
        node[inner sep=3pt, yshift=0cm, fill=white] {$Z^\b=I$}
            child[edge from parent path={(\tikzparentnode.south) -- (\tikzchildnode.north)}] {
                node [draw, rounded corners, inner sep=3pt, text width=4.0cm] {
                \begin{center}
                $e^{it H} =$\\
                    \adjustbox{height=1cm}{
                    \begin{quantikz}[row sep={0.75cm,between origins},column sep=.24ex]
                        &\qwbundle{} & \gate[
                        style={draw, shape=rectangle, rounded corners=0,text height=1.3cm, text width=.5cm,text depth=0},
                        2]{M} & 
                        \gate[style={draw,circle,minimum width=.25em,inner sep=-3pt,fill=white}]{\scriptscriptstyle\leq \! K}
                        \arrow[arrows, yshift=0cm]{d}  & \gate[
                        style={draw, shape=rectangle, rounded corners=0,text height=1.3cm, text width=.5cm,text depth=0},
                        2]{M^\dagger} && \qwbundle{} \arrow[arrows]{lllll} \\
                        &\qw & & \gate[style={draw, shape=rectangle, rounded corners=0,text height=.6cm, text width=1cm,text depth=0}]{\Ph(t)} & & \qw \arrow[arrows]{lllll}
                    \end{quantikz}
                    }
                \end{center}
                }
            }
        }
        child [level 3/.style={sibling distance=2.0cm, level distance=1.75cm},
           ]{
        node[inner sep=3pt, yshift=0cm, fill=white] {$Z^\b \neq I$}
            child[edge from parent path={(\tikzparentnode.south) -- (\tikzchildnode.north)}] {
                node [draw, rounded corners, inner sep=3pt,text width=4.0cm] {
                \vspace{-1\baselineskip}
                \begin{center}
                $e^{itH} =\prod_{s\in\{+,-\}}$\\
                    \adjustbox{height=1cm}{
                    \begin{quantikz}[row sep={0.75cm,between origins},column sep=.24ex]
                        &\qwbundle{} & \gate[
                        style={draw, shape=rectangle, rounded corners=0,text height=1.3cm, text width=.5cm,text depth=0},
                        2]{M_{\scriptscriptstyle s}} & 
                        \gate[style={draw,circle,minimum width=.25em,inner sep=-4pt,fill=white}]{\scriptscriptstyle\leq \! K_{\scriptscriptstyle s}}
                        \arrow[arrows, yshift=0cm]{d}  & \gate[
                        style={draw, shape=rectangle, rounded corners=0,text height=1.3cm, text width=.5cm,text depth=0},
                        2]{M^\dagger_{\scriptscriptstyle s}} && \qwbundle{} \arrow[arrows]{lllll} \\
                        &\qw & & \gate[style={draw, shape=rectangle, rounded corners=0,text height=.6cm, text width=1.2cm,text depth=0}]{\Ph(s t)} & & \qw \arrow[arrows]{lllll}
                    \end{quantikz}
                    }
                \end{center}
                }
            }
        }
    }
    child [level 2/.style={sibling distance=4.5cm, level distance = 1cm},
           ] {
        node[yshift=.0cm]  {$X^\a\neq I$}
        child  [level 3/.style={sibling distance=2.0cm, level distance=1.75cm},
           ]{
        node[inner sep=3pt, yshift=0cm, fill=white] {$[X^\a,Z^\b]\neq 0$}
            child[edge from parent path={(\tikzparentnode.south) -- (\tikzchildnode.north)}] {
                node [draw, rounded corners, inner sep=3pt,text width=4.0cm] {
                \vspace{-.75\baselineskip}
                \begin{center}
                $e^{itH} = $\\
                    \adjustbox{height=1cm}{
                    \begin{quantikz}[row sep={0.75cm,between origins},column sep=.24ex]
                        &\qwbundle{} & \gate[
                        style={draw, shape=rectangle, rounded corners=0,text height=1.3cm, text width=.5cm,text depth=0},
                        2]{M} & 
                        \gate[style={draw,circle,minimum width=.25em,inner sep=-3pt,fill=white}]{\scriptscriptstyle\leq \! K}
                        \arrow[arrows, yshift=0cm]{d}  & \gate[
                        style={draw, shape=rectangle, rounded corners=0,text height=1.3cm, text width=.5cm,text depth=0},
                        2]{M^\dagger} && \qwbundle{} \arrow[arrows]{lllll} \\
                        &\qw & & \gate[style={draw, shape=rectangle, rounded corners=0,text height=.6cm, text width=1cm,text depth=0}]{\RY(t)} & & \qw \arrow[arrows]{lllll}
                    \end{quantikz}
                    }
                \end{center}
                }
            }
        }
        child [level 3/.style={sibling distance=2.0cm, level distance=1.75cm},
           ]{
        node[inner sep=3pt, yshift=0cm, fill=white] {$[X^\a,Z^\b] = 0$}
            child[edge from parent path={(\tikzparentnode.south) -- (\tikzchildnode.north)}] {
                node [draw, rounded corners, inner sep=3pt,text width=4.0cm] {
                \vspace{-1\baselineskip}
                \begin{center}
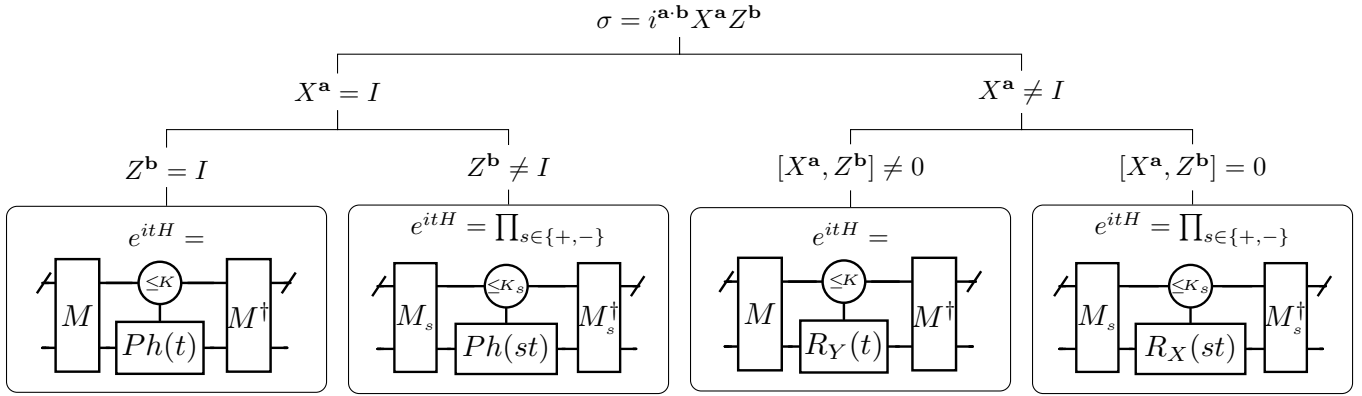

                $ e^{itH} =\prod_{s\in\{+,-\}}$ \\
                    \adjustbox{height=1cm}{
                    \begin{quantikz}[row sep={0.75cm,between origins},column sep=.24ex]
                        &\qwbundle{} & \gate[
                        style={draw, shape=rectangle, rounded corners=0,text height=1.3cm, text width=.5cm,text depth=0},
                        2]{M_{\scriptscriptstyle s}} & 
                        \gate[style={draw,circle,minimum width=.25em,inner sep=-4pt,fill=white}]{\scriptscriptstyle\leq \! K_{\scriptscriptstyle s}}
                        \arrow[arrows, yshift=0cm]{d}  & \gate[
                        style={draw, shape=rectangle, rounded corners=0,text height=1.3cm, text width=.5cm,text depth=0},
                        2]{M^\dagger_{\scriptscriptstyle s}} && \qwbundle{} \arrow[arrows]{lllll} \\
                        &\qw & & \gate[style={draw, shape=rectangle, rounded corners=0,text height=.6cm, text width=1.3cm,text depth=0}]{\RX(s t)} & & \qw \arrow[arrows]{lllll}
                    \end{quantikz}
                    }
                \end{center}
                }
            }
        }
    }
;
\end{tikzpicture}
    \captionof{figure}{
    Given a Pauli string $\sigma$ and a commuting projection operator $P_B$, the real time evolution of $H=\sigma P_B$ can be realized efficiently for small $|B|$ with permutation operators $M$
    described in Section~\ref{sec:generalcase}
    and low-pass controlled unitary gates with $K_{(s)}\leq |B|$, which are introduced in Section~\ref{sec:lowpass}.
    The specific unitary can be classified into four distinct cases depending on the relation of the $X$ and $Z$ terms in $\sigma$. 
    Here, $\Ph(t) = \ket{0}\bra{0} + e^{i t} \ket{1}\bra{1}$ is the phase shift gate and $\RX$, $\RY$ is the rotational $X$, $Y$ gate.
    }
    \label{fig:overview}
    \captionsetup{hypcap=true}
\end{center}

\section{Introduction and related work}
In this article we propose a general method to construct quantum circuits that minimize the number of non-transversal gates for realizing the 
real time evolution of a Hamiltonian of the form $H = \sigma P_B$,
where $\sigma$ is a Pauli string that commutes with the projection operator $P_B$, onto a subspace of the Hilbert space.
Evolution operators $e^{-itH}$ of this form show up in several important applications.
In VQE~\cite{peruzzo2014variational} fermionic excitation operators in second quantization take this form after applying the Jordan-Wigner transform. Similarly, in QAOA~\cite{farhi2014quantum,hadfield2019quantum}, both phase separating operators~\cite{fuchs2021efficient} and constrained mixers acting on subspaces~\cite{fuchs2023LX} can exhibit this structure.
Other examples are the trace gate for lattice gauge theory~\cite{Alam_2022}, and the transposition gate that permutes two computational basis state, which is widely found in quantum computing.

A generic way of approximating the real time evolution of any Hamiltonian on a gate based quantum computer is to decompose it in the Pauli basis
and then realize the evolution of each term of the weighted sum through a circuit given in Figure~\ref{fig:Pauliexp}.
This is exact if the terms commute, otherwise we have to employ a Trotterization.
There are two arguments against using this construction to create a circuit.
Firstly, the decomposition requires in general to evaluate $4^n$ Hilbert-Schmidt inner products,
since one needs to expressing $P$ as a weighted sum of Pauli strings.
Secondly, evaluating the circuit depth on NISQ device, which is dominated by CX-gate count, is vastly different to the fault tolerant setting, where also rotational gates can have a dominant footprint.

Fault tolerant quantum computers typically use a discrete set of gates, with one of the most common choices being Clifford$+$T gates.
A $ 2 \times 2 $ unitary matrix can be exactly expressed using Clifford+T gates if and only if its entries belong to the ring $ D[\omega]$~\cite{kliuchnikov2013exact}, where $ D[\omega]$ is the set of dyadic unitary numbers generated by $ \omega = e^{i\pi/4}$.
If an operation such as $ R_z(t) $ has matrix entries outside this ring, it must be approximated by a sequence of Clifford $+$ T gates.
This result is fundamental to quantum compiling techniques such as the Gridsynth algorithm~\cite{ross2016optimal} and other Solovay-Kitaev-like methods, which approximate arbitrary single-qubit rotations using Clifford+T circuits with a given precision $\epsilon$.
As a typical example, the approximation of $R_z(\pi/128)$ up to $\varepsilon = 10^{-10}$  
by ancilla-free Clifford+T circuits requires at least 102 T-gates~\cite{ross2016optimal}.
In the typical case, ancilla-free circuit approximations require
$\log(1/\varepsilon)$
T-gates.
On a fault-tolerant quantum device using the surface code performing a logical CX gate is much faster ($\mu s$) than magic state distillation ($ms$) in most practical scenarios~\cite{fowler2013surface,litinski2019magic}.

This motivates to derive a method to construct compact circuits for realizing $e^{-i t \sigma P_B}$ described above.
Here, compact means using as few rotational gates as possible.
Related work include circuit constructions for Hamiltonian simulation~\cite{yordanov2020efficient} and phase gadget synthesis~\cite{Cowtan2019phasegadget}.
One can also use the QR factorization for gate decomposition in certain settings~\cite{mottonen12006decompositions}.
Another central tool for unitary synthesis are recursive Cartan decompositions~\cite{Wierichs2025}, which provide a way to exactly factorize quantum circuits into smaller components.
The fundamental routine of state preparation is a closely related topic, for which many algorithms have been proposed.
There exist methods that have linear gate and qubit complexity in the number of non-zero amplitudes~\cite{ramacciotti2024simple}.

Since our algorithm is expressed in terms of MCX, transposition, and MCU gates it is useful to know their resource costs in terms of Cliffort+T gates, as summarized in
Table~\ref{tab:resource_estimates}.
In particular, the \emph{multi-controlled NOT (MCX)} gate, also called the $n$-Toffoli gate, is defined as  
\begin{equation*}
    C^n X= \ket{1}\bra{1}^{\otimes n}\otimes X+(\Id-\ket{1}\bra{1}^{\otimes n})\otimes \Id.
\end{equation*}
Different implementations optimize circuit complexity with at least one ancilla. A linear-depth and size approach ($\Order{n}$) is given in \cite{Decompositionsofn-qubitToffoliGateswithLinearCircuitComplexity}, while a recursive decomposition in \cite{Claudon_2024} achieves $\Order{\log(n)^3}$ depth at the cost of $\Order{n \log(n)^4}$ size and a higher T-count.
A \emph{transposition gate} swaps two computational basis states:
\begin{equation*}
    T_{\x,\y} = \ket{\x}\bra{\y} + \ket{\y}\bra{\x} + \sum_{\w\neq \x,\y} \ket{\w}\bra{\w}.
\end{equation*}
A near-optimal construction requires $\Theta(n)$ $X, CX$ gates, two $C^{n+1}X$ gates, and one clean ancilla \cite{herbert2024almost}. Our method achieves $T_{\x,\y}$ without ancillas, using $\Order{n}$ $X, CX$ gates, $\Order{\log(n)}$ depth, and one $C^nX$, based on Theorem~\ref{theorem:GroupGeneratedCase}.
\begin{figure}
\centering
    \begin{minipage}[t]{0.57\textwidth}
        \centering
        \adjustbox{scale=0.9}{
        \begin{tabular}{lllr}
            \toprule
             gate & \#CX & \#T=depth& \#anc \\
             \midrule
             $\RX,\RY,\RZ$             & 0 & $\Order{\log(\frac{1}{\epsilon})}$ & 0 \\
             $C^n X, T_{x,y}$             & $\Order{n}$ & $\Order{n}$ & 1 \\
             $C^n U, U\in SU(2)$ & $\Order{n}$ & $\Order{n+\log(1/\epsilon)}$ & 0 \\
             $C^{n-k} e^{it\sigma}, \sigma \in S_k$ & $\Order{n}$ & $\Order{n+\log(1/\epsilon)}$ & 0 \\
             $C_{\leq K} U, U\in SU(2)$ & $\Order{n\log(K)}$ & $\Order{\left(n+ \log (1/\epsilon)\right)\log(K)}$ & 0 \\
             \bottomrule
        \end{tabular}
        }
        \captionof{table}{
        Asymptotic resource requirements for common gates in terms of \#CX and $T$ gates, as well as number of ancilla qubits (\#anc).
        The parameter $\epsilon$ denotes the target approximation precision.
        $T_{x,y}$ is the transposition gate between two computational basis states$\ket{\x}$ and $\ket{y}$,
        $C^nX$ is the $n$-controlled NOT gate, $C^n U$ is a multi-controlled unitary, and 
        $S_k$ is the Pauli group on $k$ qubits.
        In addition, we include the scaling of the low-pass controlled gate introduced in Section~\ref{sec:lowpass}.
        }
        \label{tab:resource_estimates}
    \end{minipage}
    \hfill
    \begin{minipage}[t]{0.41\textwidth}
        \centering
        $C_\mathbf{1}^{\c}e^{it\sigma}_{\t} \! \! \! =$
        \hspace{-.8cm}
        \begin{tikzpicture}[remember picture, baseline={0cm}]
    \node[scale=.75] {
    \begin{quantikz}[row sep={0.5cm,between origins},column sep=1ex, wire types = {q,q,q,n,q,q,q}]
    &\gate[wires=6]{U}
    \lstick{$\t_k$}& \ctrl{1} &  &  & & &  &  &  & & & & \ctrl{1} & \gate[wires=6]{U^\dagger}& \\
    && \targ{} &\ctrl{1} &  & & &  &  &  & & & \ctrl{1} &\targ{}&& \\
    & &   & \targ{} &  &  & & &   &  & & &\targ{} &  &  & \\
    && &&\raisebox{2mm}{$\ddots$}&&&&&\raisebox{2mm}{$\adots$}&&&&&&\\
    &&  &  &  &  &  \ctrl{1} & & \ctrl{1} & &   &  &  &  & &  \\
    \lstick{$\t_1$}&&  &    &  &  & \targ{} & \gate{R_z(2 t)} & \targ{}&  &  &  &  &  &  &  \\
    \lstick{$\c$}&\qwbundle[]{}& &&&&&\ctrl{-1}&&&&&&&&\qwbundle[]{}
    \end{quantikz}
    };
\end{tikzpicture}

        \captionof{figure}{
        Standard circuit to realize the MCU gate for $U = e^{-i t \sigma}$, with control state $\mathbf{1} = 1\cdots 1$ on control register $\c=(1, \cdots, n-k)$ and target register $\t=(n-k+1, \cdots, n)$.
        The unitary $U = \bigotimes_{i=1}^k U_i$ realizes the basis change $Z = U_i\sigma_iU_i^\dag $.
        Note that the same circuit holds for the uncontrolled version.
        }
        \label{fig:Pauliexp}
    
    \end{minipage}%
\end{figure}
A \emph{multi-controlled unitary (MCU)} gate is given by
\begin{equation}
    \label{eq:MCU}
    C^\c_\b U_\t =
    (I_\c - \ket{\b}\bra{\b}_\c) \otimes I_\t
    + \ket{\b}\bra{\b}_\c \otimes U_\t,
\end{equation}
where $\mathbf{c}, \mathbf{t}$ index control and target qubits, and $\mathbf{b}$ specifies the control condition.
\begin{itemize}
    \item For $U \in SU(2)$ and $n$ control qubits, a decomposition with $\Order{n}$ depth and $CX$ gates exists \cite{vale2023decompositionmulticontrolledspecialunitary}. The T-count scales as $\Order{n+\log(1/\epsilon)}$.
    \item For $U=e^{i t \sigma},\sigma \in S_k$, i.e. a length $k$ Pauli string and $n-k$ control qubits, a decomposition with $\Order{n}$ depth and CX gates and $\Order{n+\log(1/\epsilon)}$ T-gates exists.
    The circuit for this case is shown in Figure~\ref{fig:Pauliexp}. Let $U_\sigma$ be the basis change and $M$ the CNOT stairs from the Figure,
    we have that $\sigma=U_\sigma^{\dag}M\mathbb{I}\otimes ZM^{\dag}U_\sigma$ and consequently $U=e^{i t \sigma}=U_\sigma^{\dag}M^{\dag}C_n\RZ(t)MU_\sigma$.
\end{itemize}

The main contribution is presented in Section~\ref{sec:main} providing an efficient construction of the real time evolution of Hamiltonians of the form $H=\sigma P_B$ for $[\sigma,P_B]=0$ in the sparse case, i.e., if $\nnze{}$ is small.
The proof for the special case of Pauli X-orbit generated subspaces is provided in Section~\ref{sec:GroupGenerated},
and for the general case in Section~\ref{sec:generalcase}.
To be able to proof the general case we introduce the concept of subspace-controlled unitary gates in Section~\ref{sec:sscug}.
In particular, we introduce a gate in Section~\ref{sec:lowpass} that applies a unitary conditioned on the reference state being in one of the first/last $K$ computational basis states.
We show that this gate, which we dub low/high-pass controlled unitary gate, admits an efficient realization.
In Section~\ref{sec:examples} we provide several examples, where our method can be applied.
One example is the transposition gate, where, to the best of our knowledge, the most efficient explicitly constructed algorithm for transposing any computational basis state requires $\Order{n}$ X and CX gates, two MCX gates, and one clean ancilla qubit~\cite{herbert2024almost},
whereas we achieve a realization with $\Order{n}$ X, CX gates, only one MCX gate  (with one less control), and no ancilla qubit.

\section{Subspace-controlled unitary gates}
\label{sec:sscug}

Central to this paper is the concept of subspace-controlled unitary operations, which can be understood as a generalization of multi-controlled unitary gates.
We start by defining the general concept, before we a introduce unitary gate, controlled by the first or last $k$ computational basis states, which serves as a building block for our main theorem.

\subsection{Theory}

In this paper a (sub)set of \emph{computational basis states} of the Hilbert space of $n$ qubits is denoted by
\begin{equation}
    \label{eq:B}
    B=\left\{\ket*{\z_j}| \ \ 1\leq j\leq J, \ \ \z_j\in\{0,1\}^n, \ \ \z_j\neq \z_j \text{ for } i\neq j \right\},
\end{equation}
and the \emph{projector} onto the subspace spanned by a $B$ is given by
\begin{equation}
    \label{eq:P_B}
    P_B \coloneqq \sum_{\ket{\z}\in B}\ket{\z}\bra{\z}.
\end{equation}
This allows us to generalize the notion of controlled gates through the following definition.
\begin{definition}[Subspace-controlled Unitary]
    Given a unitary $U:\mathcal{H}\rightarrow\mathcal{H}$ and a projector $P_B:\mathcal{H}\rightarrow\mathcal{H}$ with $[U,P_B] = 0$, we define
    a unitary operator controlled by the subspace $B$ as
    \begin{equation*}
        \CBU \coloneqq (I-P_{B}) + P_{B}UP_{B},
    \end{equation*}
    i.e. it acts as a unitary $U$ on the subspace $\Span{B}$ and as the identity on $\Span{B}^\perp$.
\end{definition}
We remark that this is well defined;
Assuming $[U ,P_B] = 0$, we compute the product of $\CBU$ with its adjoint.
Using that $U$ is unitary, $P_B$ is a projection and the commutation property it follows directly that
\begin{equation*}
    \CBU(\CBU)^{\dag}
    =I-P_{B}
    +P_{B}UP_B U^{\dagger}P_{B}
 = I,
\end{equation*}
and similarly for $(\CBU)^\dagger \CBU$ follows.
So $\CBU$ is indeed a unitary operator.
It is also easy to show the other direction that if $\CBU$ is unitary then $[P_B,U]$ = 0.
Note that the definition is consistent with multi-controlled unitary gates given in Equation~\eqref{eq:MCU}, since
\begin{equation*}
\begin{split}
    C_{ \{ \ket{\b}_\c \otimes I_\t \} } (I_\c \otimes U_\t) &= 
    \big( I-\ket{\b}\bra{\b}_\c \otimes I_\t \big) + 
    \big( \ket{\b}\bra{\b}_\c \otimes I_\t  \big)  \big( I_\c \otimes U_\t \big)   \big(   \ket{\b}\bra{\b}_\c \otimes I_\t \big)
    \\
    &=
    (I_\c-\ket{\b}\bra{\b}_\c) \otimes I_\t + 
    \ket{\b}\bra{\b}_\c \otimes U_\t
    =
    C^\c_\b U_\t.
\end{split}
\end{equation*}
A transposition gate can also be interpreted as a subspace controlled unitary
\begin{equation*}
    T_{x,y} = C_{\{\ket{x},\ket{y}\}} X^{x \oplus y},
\end{equation*}
where $x \oplus y$ is component-wise addition modulo 2.

As a reminder, the set of all \emph{Pauli strings} of length $n$ is given by
\begin{equation}
    \label{eq:PauliStrings}
    S_n \coloneqq \left\{ i^{\a \cdot \b} X^\a Z^\b \mid \ \ \a, \b \in \mathbb{Z}_2^n, \ \ 
\a \cdot \b = \sum\nolimits_{j=1}^{n} a_j b_j \!\!\!\! \pmod{4}
    \right\},
\end{equation}
where  
$
X^\a = X^{a_1} \otimes \dots \otimes X^{a_n}, \quad
Z^\b = Z^{b_1} \otimes \dots \otimes Z^{b_n}.
$
We can now interpret the time evolution of $H=\sigma P_B$ as a subspace-controlled unitary if $\sigma$ and $P_B$ commute.
\newpage
\begin{lemma}[Time evolution is subspace-controlled rotation]
    Let $P_B$ be a projector onto a subspace $B$ and $\sigma \in S_n$ be a Pauli string with $[\sigma,P_B] = 0$.
    Then for the real time evolution of $H=\sigma P_B$ we have that
    \begin{equation*}
        e^{i t \sigma P_{B}} = C_{B}e^{i t \sigma},
    \end{equation*}
    i.e. it is a subspace controlled Pauli evolution.
\end{lemma}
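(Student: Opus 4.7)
The plan is to expand both sides as power series and match them term by term, using only the idempotence of the projector and the commutation hypothesis.

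First I would write out the Taylor series $e^{i t \sigma P_B} = \sum_{k=0}^\infty \frac{(it)^k}{k!}(\sigma P_B)^k$. Because $[\sigma, P_B] = 0$ and $P_B^2 = P_B$, an easy induction on $k\geq 1$ gives $(\sigma P_B)^k = \sigma^k P_B$. Substituting this and separating the $k=0$ term yields
\begin{equation*}
    e^{i t \sigma P_B} = I + \Bigl(\sum_{k=1}^\infty \frac{(it)^k}{k!}\sigma^k\Bigr) P_B = I + \bigl(e^{i t \sigma} - I\bigr) P_B = (I - P_B) + e^{i t \sigma} P_B.
\end{equation*}

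Next I would promote this to the symmetric sandwich form required by the definition of $\CBU$. Since $[\sigma, P_B] = 0$ the exponential also commutes with $P_B$, so $e^{i t \sigma} P_B = P_B \, e^{i t \sigma} = P_B\, e^{i t \sigma} P_B$, where the last equality uses $P_B^2 = P_B$. Hence
\begin{equation*}
    e^{i t \sigma P_B} = (I - P_B) + P_B\, e^{i t \sigma} P_B = C_B e^{i t \sigma},
\end{equation*}
which is the claimed identity. As a sanity check, one verifies that $[e^{i t \sigma}, P_B] = 0$, so the right-hand side is indeed a well-defined subspace-controlled unitary in the sense of the preceding definition.

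There is no real obstacle here; the only thing to be careful about is isolating the $k=0$ term before factoring out $P_B$, since otherwise one would incorrectly conclude $e^{i t \sigma P_B} = e^{i t \sigma} P_B$, which fails on $\mathrm{span}(B)^\perp$. Everything else is routine manipulation of commuting operators and idempotent projectors.
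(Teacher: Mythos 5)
Your proof is correct and follows essentially the same route as the paper: expand the exponential as a power series, simplify the powers of $\sigma P_B$ using idempotence and commutation, and resum. The only cosmetic difference is that the paper invokes $\sigma^2 = I$ to split into even/odd terms and obtain $(I-P_B) + (\cos t + i \sin t\,\sigma)P_B$ explicitly, whereas you keep $(\sigma P_B)^k = \sigma^k P_B$ general and resum directly to $e^{it\sigma}P_B$, which is marginally more general but otherwise identical in substance.
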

\begin{proof}
From the commutation relations and the properties of a projector and Pauli strings we obtain
\begin{equation}
\label{eq:Hpower}
    (\sigma P_{B})^{2j} = P_{B}, \quad
    (\sigma P_{B})^{2j+1} = \sigma P_{B}.
\end{equation}
Expanding the matrix exponential of $H$ and using the results above, we get:
\begin{equation*}
\begin{split}
    e^{i t \sigma P_{B}} &=
    \sum_{j=0}^{\infty}\frac{(i t)^{j}}{j!}(\sigma P_{B})^{j}=
    I + \sum_{j=1}^{\infty}\frac{(i t)^{2j}}{(2j)!}(\sigma P_{B})^{2j}
    +
        \sum_{j=0}^{\infty}\frac{(i t)^{2j+1}}{(2j+1)!}(\sigma P_{B})^{2j+1}\\
    &\underset{\eqref{eq:Hpower}}{=}
    I+(\cos(t)-1)P_{B} + i\sin(t) \sigma P_{B} =
    (I-P_B) + (\cos(t) + i\sin(t) \sigma) P_{B} =
    C_Be^{it\sigma},
\end{split}
\end{equation*}
showing the assertion.
\end{proof}

\subsection{High- and low-pass controlled unitary gates}
\label{sec:lowpass}
Essential for our construction is the efficient realization of a gate that applies a unitary conditioned on the reference state being in one of the first $K$ computational basis states, which we dub low-pass controlled unitary gate.
It is defined as follows
\begin{equation*}
    C_{\leq K} U \coloneqq
    \sum_{i<K}\ket{i}\bra{i}_\c\otimes U_t+\sum_{i\geq K}\ket{i}\bra{i}_\c\otimes\Id_t=
    (\oplus_K U) (\oplus_{2^n-K} I) = 
    \adjustbox{scale=0.75}{
    $
    \begin{bmatrix}
        U      & \ZM    &        & \cdots  &        & \ZM \\
        \ZM    & \ddots &        &         &        &     \\
               &        & U      & \ddots  &        & \vdots \\
        \vdots &        & \ddots & I       &        &     \\
               &        &        &         & \ddots & \ZM \\
        \ZM    &        & \cdots &         & \ZM    & I
    \end{bmatrix}
    $
    },
\end{equation*}
where $\oplus$ denotes the direct sum of matrices.

\begin{theorem}[Low-pass controlled unitary gate]
\label{theorem:firstK}
    The low-pass controlled unitary gate can be expressed as a product of $\Order{\log(K)}$ multi-controlled unitaries, i.e.
    \begin{equation*}
        C_{\leq K}^\c U_\t=\prod_{i=0}^{p-1}C_{c_i(K)}^{\c}U_{\t},
    \end{equation*}
    where $c_i(K)$ is a bitstring given in Equation~\eqref{eq:cond_states}.
\end{theorem}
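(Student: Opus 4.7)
The plan is to decompose the index set $\{0,1,\dots,K-1\}$ into a disjoint union of intervals dictated by the binary expansion of $K$, realize each interval by a single multi-controlled unitary, and then multiply them together.

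First, write $K$ in binary. Let $k_0 > k_1 > \cdots > k_{p-1}$ be the positions of the ones in the binary expansion of $K$, so that $K = \sum_{i=0}^{p-1} 2^{k_i}$ and $p \leq \lceil \log K \rceil$. I would then show that the integers strictly below $K$ admit the disjoint partition
\begin{equation*}
    \{0,\dots,K-1\} \;=\; \bigsqcup_{i=0}^{p-1} S_i, \qquad
    S_i = \Big\{ j \;\Big|\; \lfloor j/2^{k_i+1}\rfloor = \lfloor K/2^{k_i+1}\rfloor,\ \text{bit } k_i \text{ of } j \text{ is } 0 \Big\}.
\end{equation*}
Intuitively, $S_i$ is obtained by walking from the MSB downwards and, at each one-bit of $K$ at position $k_i$, collecting all strings that agree with $K$ on the bits above $k_i$, have a $0$ at position $k_i$, and are arbitrary below. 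Proving this is just a matter of verifying the two inclusions: any $j < K$ differs from $K$ in some highest bit, which must be a $1$-bit of $K$ flipped to $0$, placing $j$ in a unique $S_i$; and conversely any element of $S_i$ is bounded above by $K - 2^{k_i} \cdot 1 + (2^{k_i}-1) < K$. This is where the bitstring $c_i(K)$ of Equation~\eqref{eq:cond_states} comes in: it encodes exactly the control pattern ``match the prefix of $K$ above position $k_i$ and set bit $k_i$ to $0$'', which is a fixed pattern on $n-k_i$ specified qubits, free on the remainder.

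Given the partition, the multi-controlled gate $C^{\c}_{c_i(K)} U_{\t}$ acts as $U_{\t}$ on computational basis states of $\c$ belonging to $S_i$ and as the identity elsewhere. Since the $S_i$ are pairwise disjoint, for any computational basis state $\ket{j}_{\c}$ at most one factor in the product acts non-trivially on the target; equivalently, the control-register projectors $\sum_{j\in S_i}\ket{j}\bra{j}_{\c}$ are mutually orthogonal and each commutes with $I_{\c}\otimes U_{\t}$. Hence the product $\prod_{i=0}^{p-1} C^{\c}_{c_i(K)} U_{\t}$ factors order-independently and equals $\sum_{j<K}\ket{j}\bra{j}\otimes U_{\t} + \sum_{j\geq K}\ket{j}\bra{j}\otimes I_{\t}$, which is precisely $C_{\leq K}^{\c}U_{\t}$. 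The claim on the number of factors follows from $p \leq \lceil \log K\rceil = \mathcal{O}(\log K)$.

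I do not expect a genuine obstacle; the main care is bookkeeping. The one step worth double-checking is the disjointness and totality of the $S_i$ partition, because an off-by-one in whether bit $k_i$ is constrained to $0$ (and whether the bits strictly below $k_i$ are free or constrained) directly determines whether the union covers $\{0,\dots,K-1\}$ exactly once. Everything after that partition is mechanical: rewriting each $S_i$ as a prefix-controlled pattern, invoking that multi-controlled unitaries with disjoint control patterns commute, and counting the one-bits of $K$.
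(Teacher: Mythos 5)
Your proof is correct and follows essentially the same route as the paper: both decompose $\{0,\dots,K-1\}$ into the $p$ fixed-prefix/free-suffix blocks determined by the one-bits of $K$, realize each block as a single multi-controlled $U$ with control pattern $c_i(K)$, and use disjointness of the control projectors to collapse the product into the sum. Your sets $S_i$ coincide with the paper's intervals $[K_j,K_{j+1}-1]$ (you characterize them by bit patterns, the paper by cumulative sums), and your ``highest differing bit'' argument just makes explicit the coverage/disjointness claim the paper leaves as ``easy to check.''
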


\begin{proof}
    First, we express $K$ in its binary representation
    \begin{equation*}
        K = 2^{k_1}+2 ^{k_2} \cdots+ 2^{k_p}, \text{ s.t. }k_1>k_2>\cdots>k_p,
    \end{equation*}
    which means that $k_i$ are the indices of the binary strings that are 1 in big-endian byte encoding.
    We partition the integer interval $[0,K-1]$ into successive segments with dimension that are powers of two determined by the presence of $1$s in the binary expansion of $K$.
    Given a number $1\leq K \leq 2^n$ we define an auxiliary sequence that represents the cumulative sum of powers of two from $0$ to $i$ given by 
    \begin{equation*}
        K_j = 
        \sum_{i=1}^{j} 2^{k_j}, \quad 0\leq j\leq p.
    \end{equation*}
    It is easy to check that the binary representation of $K_j$ to $K_{j+1}-1$ has the first $n-k_j$ entries fixed, while the remaining indices go through all possible combinations of bitstrings from $0$ to $K_{j+1}-1$.
    The bits in common for the interval $[K_j, K_{j+1}-1]$ are given by the binary representation of
    \begin{equation}
        \label{eq:cond_states}
        c_j(K) = \sum_{i=1}^{j-1} 2^{k_i - k_j},
    \end{equation}
    which allows us to express
    \begin{equation*}
        \sum_{i=K_j}^{K_{j+1}-1}\ket{i}\bra{i} = \ket{c_j(K)}\bra{c_j(K)}\otimes \Id^{\otimes k_j}
    \end{equation*}
    
    Using this notation, the first $K$-state controlled unitary operator can be decomposed as follows:
    \begin{equation*}
        \begin{split}
            C_{\leq K}^\c U_\t
            &=
            \sum_{i<K}\ket{i}\bra{i}_\c\otimes U_t+\sum_{i\geq K}\ket{i}\bra{i}_\c\otimes\Id_t \\
            &=
            \prod_{j=0}^{p-1} \left( \sum_{i=K_j}^{K_{j+1}-1} \ket{i}\bra{i}_\c \otimes U_\t + \sum_{i\notin (K_j,K_{j+1}]} \ket{i}\bra{i}_\c \otimes \Id_\t \right) \\
            &=
            \prod_{j=0}^{p-1} \left( \sum_{i=K_j}^{K_{j+1}-1} (\ket{c_j(K)}\bra{c_j(K)}\otimes \Id^{\otimes k_j})_\c \otimes U_\t + \sum_{i\notin (K_j,K_{j+1}]} \ket{i}\bra{i}_\c \otimes \Id_\t \right) \\
            &=
            \prod_{j=0}^{p-1}C_{c_j(K)}^{\c}U_{\t}.
        \end{split}
    \end{equation*}
    where each term in the product can be interpreted as a block matrix acting on the subspace spanned by basis states between $K_j$ and $K_{j+1}$, while the identity operation applies elsewhere.
\end{proof}

\begin{remark}
    In the special case where $U \in SU(2)$, the 
    number of CX gates scales as $\Order{n\log(K)}$ and the depth and number of T gates
    scales as $\Order{\left(n+ \log (\frac{1}{\epsilon})\right)\log(K)}$. 
    The second term related to $\epsilon$ comes from the overhead from approximating single-qubit rotational gates.
\end{remark}

As an example $K = 42 = 2^5 + 2^3 + 2^1$, i.e, $\mathbf{k} = (5,3,1)$.
The intervals are given by
\begin{equation*}
    \begin{split}
        [K_0, K_1 - 1] &= [\underline{0}00000, \underline{0}11111],\\
        [K_1, K_2 - 1] &= [\underline{100}000, \underline{100}111],\\
        [K_2, K_3 - 1] &= [\underline{10100}0, \underline{10100}1],
    \end{split}
\end{equation*}
where the common bits, $c_1(K)=0, c_2(K)=100$, and $c_3(K)=10100$, in the ranges are underlined.
Overall, this leads to the circuit
\begin{equation*}
    C^{(1,\cdots,6)}_{\leq 42}U_{(7)}=
    \begin{tikzpicture}[remember picture, baseline={0cm}]
        \node[scale=.75] {
        \begin{quantikz}[row sep={0.35cm,between origins},column sep=1ex]
        \qw & \octrl{6} & \ctrl{1}  & \ctrl{1}  & \qw  \\
        \qw &           & \octrl{1} & \octrl{1} & \qw  \\
        \qw &           & \octrl{4} & \ctrl{1}  & \qw  \\
        \qw &           &           & \octrl{1} & \qw  \\
        \qw &           &           & \octrl{2} & \qw  \\
        \qw &           &           &           & \qw  \\
        \qw & \gate{U}  & \gate{U}  & \gate{U}  & \qw   
        \end{quantikz}.
        };
    \end{tikzpicture}
\end{equation*}
In conclusion, the circuit consists of a sequential application of multi-controlled unitaries, where the number of control qubits increases according to the binary representation of K. Since the cost of $C^{n}U$ gate scales linearly in both depth and CX size with respect to the number of controls $\Order{n}$. Summing over all required control levels up to $\log(K)$ the total cost scales as:
\begin{figure}
    \centering
    $C_{\leq 1} U =$
\begin{tikzpicture}[remember picture, baseline={0cm}]
    \node[scale=.5] {
    \begin{quantikz}[row sep={0.5cm,between origins},column sep=1ex]
        \qw & \octrl[style={scale=1.5}]{1}  & \\
        \qw &  \octrl[style={scale=1.5}]{1} & \\
        \qw &  \octrl[style={scale=1.5}]{1} & \\
        \qw & \gate{U} &
    \end{quantikz}
    };
\end{tikzpicture}
    $=$
\begin{tikzpicture}[remember picture, baseline={0cm}]
    \node[scale=.5] {
    \begin{quantikz}[row sep={0.5cm,between origins},column sep=1ex]
        \qw &          & \ctrl[style={scale=1.5}]{3}      &  \octrl[style={scale=1.5}]{1}     &  \octrl[style={scale=1.5}]{1}     & \\
        \qw &          &               &  \ctrl[style={scale=1.5}]{2}      &  \octrl[style={scale=1.5}]{1}     & \\
        \qw &          &               &                &  \ctrl[style={scale=1.5}]{1}      & \\
        \qw & \gate{U} & \gate{U^\dag} &  \gate{U^\dag} &  \gate{U^\dag} & 
    \end{quantikz}
    };
\end{tikzpicture},
$C_{\leq 2} U =$
\begin{tikzpicture}[remember picture, baseline={0cm}]
    \node[scale=.5] {
    \begin{quantikz}[row sep={0.5cm,between origins},column sep=1ex]
        \qw & \octrl[style={scale=1.5}]{1}  & \\
        \qw & \octrl[style={scale=1.5}]{2}       & \\
        \qw &                 & \\
        \qw & \gate{U} & 
    \end{quantikz}
    };
\end{tikzpicture}
    $=$
\begin{tikzpicture}[remember picture, baseline={0cm}]
    \node[scale=.5] {
    \begin{quantikz}[row sep={0.5cm,between origins},column sep=1ex]
        \qw &          & \ctrl[style={scale=1.5}]{3}      & \octrl[style={scale=1.5}]{1}     & \\
        \qw &          &               & \ctrl[style={scale=1.5}]{2}      & \\
        \qw &          &               &               & \\
        \qw & \gate{U} & \gate{U^\dag} & \gate{U^\dag} & 
    \end{quantikz}
    };
\end{tikzpicture},
$C_{\leq 3} U =$
\begin{tikzpicture}[remember picture, baseline={0cm}]
    \node[scale=.5] {
    \begin{quantikz}[row sep={0.5cm,between origins},column sep=1ex]
        \qw & \octrl[style={scale=1.5}]{1}  & \octrl[style={scale=1.5}]{1} & \qw   \\
        \qw & \octrl[style={scale=1.5}]{2}       & \ctrl[style={scale=1.5}]{1}  & \qw \\
        \qw &                 & \octrl[style={scale=1.5}]{1} & \qw  \\
        \qw & \gate{U}        & \gate{U}  & \qw 
    \end{quantikz}
    };
\end{tikzpicture}
    $=$
\begin{tikzpicture}[remember picture, baseline={0cm}]
    \node[scale=.5] {
    \begin{quantikz}[row sep={0.5cm,between origins},column sep=1ex]
        \qw &          & \ctrl[style={scale=1.5}]{3}      & \octrl[style={scale=1.5}]{1}     & \\
        \qw &          &               & \ctrl[style={scale=1.5}]{1}      & \\
        \qw &          &               & \ctrl[style={scale=1.5}]{1}      & \\
        \qw & \gate{U} & \gate{U^\dag} & \gate{U^\dag} & 
    \end{quantikz}
    };
\end{tikzpicture},
$C_{\leq 4} U =$
\begin{tikzpicture}[remember picture, baseline={0cm}]
    \node[scale=.5] {
    \begin{quantikz}[row sep={0.5cm,between origins},column sep=1ex]
        \qw & \octrl[style={scale=1.5}]{3}  & \\
        \qw &                 & \\
        \qw &                 & \\
        \qw & \gate{U} & 
    \end{quantikz}
    };
\end{tikzpicture}
    $=$
\begin{tikzpicture}[remember picture, baseline={0cm}]
    \node[scale=.5] {
    \begin{quantikz}[row sep={0.5cm,between origins},column sep=1ex]
        \qw &          & \ctrl[style={scale=1.5}]{3} & \\
        \qw &          &   & \\
        \qw &          &   & \\
        \qw & \gate{U} & \gate{U^\dag} & 
    \end{quantikz}
    };
\end{tikzpicture},

$C_{\leq 5} U =$
\begin{tikzpicture}[remember picture, baseline={0cm}]
    \node[scale=.5] {
    \begin{quantikz}[row sep={0.5cm,between origins},column sep=1ex]
        \qw & \octrl[style={scale=1.5}]{3} & \ctrl[style={scale=1.5}]{1}  & \\
        \qw &           & \octrl[style={scale=1.5}]{1} & \\
        \qw &           & \octrl[style={scale=1.5}]{1} & \\ 
        \qw & \gate{U}  & \gate{U}  &
    \end{quantikz}
    };
\end{tikzpicture}
    $=$
\begin{tikzpicture}[remember picture, baseline={0cm}]
    \node[scale=.5] {
    \begin{quantikz}[row sep={0.5cm,between origins},column sep=1ex]
        \qw &          & \ctrl[style={scale=1.5}]{1}      & \ctrl[style={scale=1.5}]{1}      & \\
        \qw &          & \ctrl[style={scale=1.5}]{2}      & \octrl[style={scale=1.5}]{1}     & \\
        \qw &          &               & \ctrl[style={scale=1.5}]{1}      & \\
        \qw & \gate{U} & \gate{U^\dag} & \gate{U^\dag} &
    \end{quantikz}
    };
\end{tikzpicture},
$C_{\leq 6} U =$
\begin{tikzpicture}[remember picture, baseline={0cm}]
    \node[scale=.5] {
    \begin{quantikz}[row sep={0.5cm,between origins},column sep=1ex]
        \qw & \octrl[style={scale=1.5}]{3} & \ctrl[style={scale=1.5}]{1}  & \\
        \qw &           & \octrl[style={scale=1.5}]{2} & \\
        \qw &           &           & \\ 
        \qw & \gate{U}  & \gate{U}  &
    \end{quantikz}
    };
\end{tikzpicture}
    $=$
\begin{tikzpicture}[remember picture, baseline={0cm}]
    \node[scale=.5] {
    \begin{quantikz}[row sep={0.5cm,between origins},column sep=1ex]
        \qw &          & \ctrl[style={scale=1.5}]{1}      & \\
        \qw &          & \ctrl[style={scale=1.5}]{2}      & \\
        \qw &          &               & \\
        \qw & \gate{U} & \gate{U^\dag} &
    \end{quantikz}
    };
\end{tikzpicture},
$C_{\leq 7} U =$
\begin{tikzpicture}[remember picture, baseline={0cm}]
    \node[scale=.5] {
    \begin{quantikz}[row sep={0.5cm,between origins},column sep=1ex]
        \qw & \octrl[style={scale=1.5}]{3}  & \ctrl[style={scale=1.5}]{1}                & \qw   \\
        \qw &                 &  \octrl[style={scale=1.5}]{2}    & \qw \\
        \qw &                 &                 & \qw  \\
        \qw & \gate{U} & \gate{U}  & \qw 
    \end{quantikz}
    };
\end{tikzpicture}
    $=$
\begin{tikzpicture}[remember picture, baseline={0cm}]
    \node[scale=.5] {
    \begin{quantikz}[row sep={0.5cm,between origins},column sep=1ex]
        \qw &          & \ctrl[style={scale=1.5}]{1}      & \\
        \qw &          & \ctrl[style={scale=1.5}]{1}      & \\
        \qw &          & \ctrl[style={scale=1.5}]{1}      & \\
        \qw & \gate{U} & \gate{U^\dag} & 
    \end{quantikz}
    };
\end{tikzpicture},
$C_{\leq 8} U =$
\begin{tikzpicture}[remember picture, baseline={0cm}]
    \node[scale=.5] {
    \begin{quantikz}[row sep={0.5cm,between origins},column sep=1ex]
        \qw &  & \\ 
        \qw &  & \\ 
        \qw &  & \\ 
        \qw & \gate{U} &
    \end{quantikz}
    };
\end{tikzpicture},
    \caption{
        All possible low-pass controlled gates $C^\c_{\leq K}U_\t$ for $n=3$ with control register $\c=(1,2,3)$ and target register $\t=(4)$.
        Note, that they can also be realized with specific high-pass controlled gates, namely $(\Id^\c\otimes U_\t) (C^\c_{\geq K+1}U^\dagger_t)$.
    }
    \label{fig:lessthank_n3}
\end{figure}
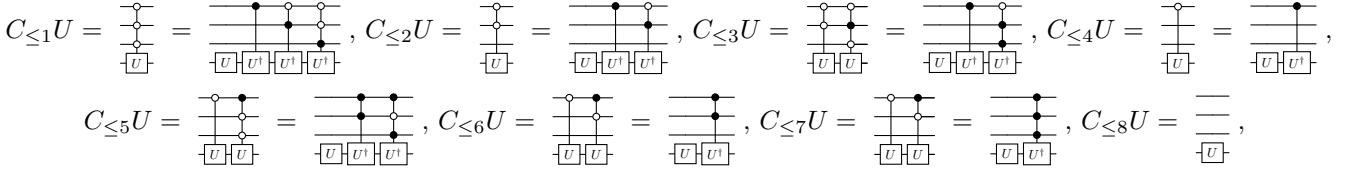
 \begin{equation*}
     \sum_i^{\log(K)}\Order{n-i} = \Order{n\log(K)}.
 \end{equation*}
In the specific case of $U\in SU(2)$ the T count has the same scaling plus a contribution coming from the single qubit approximation and since we have $\log(K)$ multicontrol U this overhead scales as $\Order{\log(k)\log(\frac{1}{\epsilon})}$.
\begin{corollary}[High-pass controlled unitary gate]
    We define the conditional gate that applies a unitary $U$ to a target register $t$ if the control register $c$ is in one of the \emph{last} $K$ computational basis states as
    \begin{equation*}
        C_{\geq K}^\c U_\t=\sum_{i\geq K}\ket{i}\bra{i}_\c\otimes U_t+\sum_{i< K}\ket{i}\bra{i}_\c\otimes\Id_t=\prod_{i=0}^{p-1}C_{c_i(2^n-K+1)+\mathds{1}}^{\c}U_{\t},
    \end{equation*}
    where $c_i(.)$ is a bitstring given in Equation~\eqref{eq:cond_states}.
    The depth and size scale the same as the low-pass controlled gate.
\end{corollary}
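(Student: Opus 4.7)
The plan is to reduce the high-pass corollary to the already established Theorem~\ref{theorem:firstK} on low-pass controlled gates via the bit-flip symmetry realised by $X^{\otimes n}$ on the control register. The key observation is that $X^{\otimes n}\ket{i} = \ket{2^n - 1 - i}$, so conjugation by $X^{\otimes n}$ reverses the order of computational basis indices on the control and therefore swaps initial ranges $\{0,\ldots,M\}$ with terminal ranges $\{2^n - 1 - M,\ldots,2^n - 1\}$.

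Concretely, I would first verify the identity
$$
  C_{\geq K}^{\c} U_{\t}
  \;=\;
  X^{\otimes n}_{\c} \cdot C_{\leq 2^n - K}^{\c} U_{\t} \cdot X^{\otimes n}_{\c}
$$
by inserting the spectral definitions of both gates and observing that the $2^n - K$ indices on which the high-pass gate applies $U$ are exactly the image of $\{0,1,\ldots,2^n - K - 1\}$ under $i \mapsto 2^n - 1 - i$. This one-line computation reduces the proof entirely to the low-pass case.

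Next I would apply Theorem~\ref{theorem:firstK} to expand $C_{\leq 2^n - K}^{\c} U_{\t}$ as a product of multi-controlled unitaries and then conjugate each factor individually by $X^{\otimes n}_{\c}$. Because each $C_{b}^{\c} U_{\t}$ controls only on the fixed prefix of the control register while leaving the remaining qubits free, the $X$ gates on the free qubits commute through identities and only the prefix projector gets flipped, giving $X^{\otimes n}_{\c}\, C_{b}^{\c} U_{\t}\, X^{\otimes n}_{\c} = C_{b \oplus \mathds{1}}^{\c} U_{\t}$. This yields a product of the form $\prod_i C_{\overline{c_i(2^n - K)}}^{\c} U_{\t}$, which reproduces the structure of the stated decomposition once the bit-complemented prefix is expressed via the notation of Equation~\eqref{eq:cond_states}.

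Finally, the scaling claim is immediate: the two bounding layers of $X^{\otimes n}_{\c}$ contribute only $O(n)$ single-qubit Cliffords and neither CX nor T gates, so both CX-count and T-count as well as depth inherit verbatim from the low-pass bounds in Theorem~\ref{theorem:firstK}. The main (purely bookkeeping) obstacle I anticipate is matching the bit-complemented prefix coming out of the $X$-conjugation to the exact closed-form expression $c_i(2^n - K + 1) + \mathds{1}$ in the statement; I would cross-check this on a small example such as $n = 3$, $K = 5$ to rule out off-by-one slips, but do not foresee any conceptual difficulty beyond careful index tracking through the partition defined in the proof of Theorem~\ref{theorem:firstK}.
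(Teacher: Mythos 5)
Your proposal is correct and follows essentially the same route as the paper, whose entire justification is the one-sentence observation that complementing every bit (i.e.\ conjugating the control register by $X^{\otimes n}$) maps the first $K$ computational basis states onto the last $K$, reducing the high-pass gate to Theorem~\ref{theorem:firstK}; your version merely makes the conjugation identity and the factor-by-factor flip $C_{b}^{\c}U_{\t}\mapsto C_{b\oplus\mathds{1}}^{\c}U_{\t}$ explicit, which is a strictly more careful write-up of the same argument. The off-by-one you flag is real but sits on the paper's side: the set $\{i \geq K\}$ has $2^n-K$ elements, so the reduction should invoke $C_{\leq 2^n-K}$ as you have it, rather than the argument $2^n-K+1$ appearing in the stated formula.
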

The result follow from Theorem~\ref{theorem:firstK} by observing that counting from $0$ to $K-1$ in binary is equivalent to counting from $2^n$ down to $(2^n - K - 1)$ and adding $11\ldots1$ to the strings modulo 2.

Note, that one can always write $C^\c_{\leq K}U_\t = (\Id^\c\otimes U_\t) (C^\c_{\geq K+1}U^\dagger_t)$, which sometimes allows for a more efficient realization.
Figure~\ref{fig:lessthank_n3} shows an example for $n=3$, where we can see that for $K\in{3,6,7}$ this is indeed the case.

\begin{remark}[Low-/high-pass controlled phase gates]
    The $C^\c_{\leq K} \Ph(t)$ gate applies a phase to the first $K$ states and consequently has no target qubit.
    An example is provided in Section~\ref{sec:oraclemaxkcut}.
\end{remark}

\section{Main theorem}
\label{sec:main}

We begin by introducing an important class of subspaces that will be central to the results that follow.
Let a group $G$ act on a set $X$, then the \textit{orbit} of an element $x \in X$ under this action is defined as
\begin{equation*}
G \cdot x = \{ g \cdot x \mid g \in G \}.
\end{equation*}
In the following we will consistently drop the $\cdot$ in the notation.
We define the following class of subspaces.
\begin{definition}[Pauli X-orbit subspaces]
Let ${ \lX_1, \ldots, \lX_k }$ with $\lX_i \in {I, X}^n$ be a minimal generating set of the group $G_k = \langle \lX_1, \ldots, \lX_k \rangle$.
Then the subspace generated by an element of the set of computational basis states, $\ket{z}$, is defined by the linear span of the orbit under $G_k$ as $\Span{G_k \ket{z}}$.  
\end{definition}
Note, that $|G_k \ket{z}|=2^k$ and that 
$G_k \ket{z}$ consists entirely of computational basis states. Moreover, the structure of these orbits satisfies the recursion
\begin{equation}
\label{eq:ggs_structure}
G_{k+1} \ket{z} = G_k \ket{z} \oplus G_k \lX_{k+1} \ket{z},
\end{equation}
where $\oplus$ denotes the direct sum of subspaces.
With this definition in place, we are now ready to state our main theorem.

\begin{theorem}
    \label{theorem:main}
    Let $P_B$ be the projector onto a subspace given by set of computational basis states $B$, and
    let $\sigma \in S_n$ be a Pauli string fulfilling $[\sigma, P_B]=0$.
    Then we can realize $e^{it H}$ for $H=\sigma P_B$ with a circuit consisting of
    \begin{itemize}
        \item at most two operators $M$ that permute states in $B$ to the first $K$ states together with $M^\dag$, and
        \item at most two low-pass controlled unitaries $C_{\leq K}^\c U_\t$ for $K\leq \nnze{}$ and $U_\t \in SU(2)$,
    \end{itemize}
    with the structure shown in Figure~\ref{fig:overview}.
    
    For a \textbf{general subspace} the permutation $M$ can be realized with $\Order{\nnze{}}$ transposition, so in total this means we need $\Order{n \nnze{}}$ CX gates and $\Order{n \nnze{} + \log(\nnze{}) \log (\frac{1}{\epsilon})}$ T gates, where $\epsilon$ is a given tolerance for approximating $\RZ$ gates.
   
    For a \textbf{Pauli X-orbit subspace} the permutation $M$ can be realized with $\Order{n \lognnze{}}$ CX gates with depth of $\Order{\log(n)\lognnze{}}$ and one $ C_{x}^{\c} R_{\sigma,\t}(\pm t)$, which can be realized with $\Order{n}$ CX gates and $\Order{n+\log(\frac{1}{\epsilon})}$ T gates.

    In both cases, the circuit depth scales as the number of T gates.
\end{theorem}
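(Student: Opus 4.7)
The plan is to combine the earlier lemma $e^{itH}=C_Be^{it\sigma}$ with a case analysis of how $\sigma=i^{\a\cdot\b}X^\a Z^\b$ acts on $B$, and in each case construct a basis-state permutation $M$ that pushes the relevant indices into the first $K$ computational basis states. In that normal form the action of $e^{it\sigma}$ on the first-$K$ block is a single-qubit rotation or phase, so wrapping with the low-pass controlled gate of Section~\ref{sec:lowpass} and conjugating by $M$ produces exactly the circuits of Figure~\ref{fig:overview}.

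First I would write $\sigma\ket{\z}=i^{\a\cdot\b}(-1)^{\b\cdot\z}\ket{\z\oplus\a}$ and note that $[\sigma,P_B]=0$ forces $B$ to be closed under $\z\mapsto\z\oplus\a$. I then branch according to the decision tree of Figure~\ref{fig:overview}:
\begin{itemize}
\item If $X^\a=I$ and $Z^\b=I$, then $\sigma=I$ and $e^{itH}$ is a single phase on $\Span{B}$, realised by one permutation and one $C_{\leq|B|}\Ph(t)$.
\item If $X^\a=I$ and $Z^\b\neq I$, then $\sigma$ is diagonal with eigenvalues $(-1)^{\b\cdot\z}$ on $B$, so $B=B^+\sqcup B^-$ and we apply two permutations $M_\pm$, each followed by $C_{\leq|B^\pm|}\Ph(\pm t)$.
\item If $X^\a\neq I$, then $B$ partitions into pairs $\{\ket{\z},\ket{\z\oplus\a}\}$. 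A $2\times 2$ calculation on each pair shows that $\sigma$ acts there as a scalar multiple of Pauli $Y$ when $\a\cdot\b$ is odd and as a scalar multiple of Pauli $X$ when $\a\cdot\b$ is even. In the odd case the two pair elements have opposite $\b\cdot\z$ parity, so choosing the $\b\cdot\z$-even element as the ``first'' member of every pair makes the scalar globally constant, and one permutation $M$ together with a single $C_{\leq|B|/2}\RY(t)$ suffices. In the even case the scalar is the same within a pair but can flip between pairs, forcing the extra split $B=B^+\sqcup B^-$ and two applications of $C_{\leq|B^\pm|/2}\RX(\pm t)$.
\end{itemize}

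For the counting, in the \emph{general} case I would realise each $M$ as a product of at most $|B|$ transpositions of arbitrary computational basis states; each transposition costs $\mathcal{O}(n)$ X/CX gates and one $C^{n-1}X$ by the construction that leads to Theorem~\ref{theorem:GroupGeneratedCase}, giving the claimed $\mathcal{O}(n|B|)$ CX and $\mathcal{O}(n|B|)$ T totals, and Theorem~\ref{theorem:firstK} contributes the $\mathcal{O}(\log|B|\log(1/\epsilon))$ rotation-approximation overhead. For the \emph{Pauli $X$-orbit} case I would exploit the recursion $G_{k+1}\ket{\z}=G_k\ket{\z}\oplus G_k\lX_{k+1}\ket{\z}$: an element $\lX_1^{\varepsilon_1}\cdots\lX_k^{\varepsilon_k}\ket{\z}$ is indexed by a binary address $\varepsilon\in\{0,1\}^k$, so $M$ can be built as a Clifford addressing circuit that first maps $\ket{\z}$ to $\ket{0}$ and then writes each generator $\lX_i$ as a CX fan-out controlled by the $i$-th address qubit. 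Balancing these fan-outs in a binary tree yields the stated $\mathcal{O}(n\log|B|)$ CX count at depth $\mathcal{O}(\log(n)\log|B|)$, and after $M$ only one controlled Pauli rotation survives, contributing $\mathcal{O}(n+\log(1/\epsilon))$ T-gates.

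The main obstacle is the $X^\a\neq I$ branch: tracking the phases $i^{\a\cdot\b}(-1)^{\b\cdot\z}$ precisely enough to argue that the chosen pair ordering really does collapse $\sigma$ to a literal $\pm Y$ or $\pm X$ on the permuted basis, and in the $[X^\a,Z^\b]=0$ subcase identifying the splitting $B=B^+\sqcup B^-$ correctly. The second delicate point is the Pauli $X$-orbit subclaim itself: arguing that the address-to-orbit-element Clifford circuit really admits an $\mathcal{O}(n\log|B|)$ CX implementation with logarithmic depth, rather than the naive $|B|$-transposition bound inherited from the general case.
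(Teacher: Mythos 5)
Your proposal is correct and follows essentially the same route as the paper: the same four-way case split on $X^\a$ and $Z^\b$, the same permutation-into-the-first-$K$-states normal form combined with the low-pass controlled gate, the same transposition-based count for general $B$, and the same recursive (GHZ-fan-out) construction of $M$ for Pauli X-orbits, including the key observations that the anticommuting case yields a globally constant $\pm Y$ after ordering each pair by $Z^\b$-eigenvalue and that the commuting case forces the split $B=B^+\sqcup B^-$. The only piece you leave implicit — that $\sigma$ propagates through the X/CX circuit $M$ to leave a single controlled Pauli rotation up to sign — is exactly the second assertion of Theorem~\ref{theorem:GroupGeneratedCase} in the paper.
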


The idea of the construction/proof of Theorem~\ref{theorem:main} is to partition $B$ into sets $B_i$, where $\sigma$ acts as a unitary $U \in SU(2)$ on each set $B_s$.
The circuit to realize $e^{it H}$ consist of first applying a basis change $M_s$ for each of the sets $B_s$ that maps to the first $|B_s|$ computational basis states,
apply the unitary $U$, conditional on being in the first $|B_s|$ states, and apply the inverse basis change $M_s^\dag$,
as depicted in Figure~\ref{fig:overview}.
The proof for the case when $B$ is a Pauli X-orbit is presented in Section~\ref{sec:GroupGenerated}, and the general case in Section~\ref{sec:generalcase}.

\begin{remark}[Direct sum of Pauli X-generated subspaces]
    \label{remark:directsumgg}
    Even though $\Span{B}$ might not be a Pauli X-orbit subspace, it can be decomposed into a direct sum of Pauli X-orbit subspaces. As long as the number of those subspaces stays sufficiently small, this can still allow for an efficient construction.    
\end{remark}

\subsection{Proof for Pauli X-orbit generated subspaces}
\label{sec:GroupGenerated}

An important special case, where the subspace $B$ is given by a Pauli X-orbit of an $\ket{z} \in B$, we proof that there exists a permutation matrix $M$ consisting only of $X$ and $CX$ gates.

\begin{theorem}
\label{theorem:GroupGeneratedCase}
    Let $B=G_k\ket{z}$ be a Pauli X-orbit of $G_k = \langle \lX_1, \ldots, \lX_k \rangle$
    with $\Span{B} = 2^k$ and $P_B$ the projector onto $B$.
    \begin{enumerate}
        \item 
    Then there exists a permutation operator $M$ consisting of $\Order{n k}$ $X,CX$ gates such that
    \begin{equation}
        \label{eq:MPM_ggs}
        M P_B M^\dag = I_k \otimes \ket{0}\bra{0}_{n-k},
    \end{equation}
    where $I_k$ is the identity operator on $k$ qubits, and $\ket{0}\bra{0}_{n-k} = \bigotimes_{i=1}^{n-k} \ket{0}\bra{0}$.
    \item
    Let $\sigma$ be a Pauli string that commutes with $P_B$.
    Then $\exists \ \widehat \sigma \in S_k$ such that
    \begin{equation}
        \label{eq:SMPM_ggs}
        \sigma P_B  = \pm M  \widehat \sigma \otimes \ket{0}\bra{0}_{n-k} M^\dag,
    \end{equation}
    i.e. one can propagate $\sigma$ through $M$ consisting only of $X,CX$ gates such that it acts at most as a sign change on the last $n-k$ qubits.
    \end{enumerate}
\end{theorem}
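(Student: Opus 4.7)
The plan is to view the Pauli X-orbit $B$ as an affine subspace of $\mathbb{F}_2^n$ and reduce it to a coordinate subspace via CNOT (Gaussian) elimination together with a handful of $X$ gates, then exploit the fact that Clifford conjugation preserves the Pauli group for Part~2.

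First I would encode each generator $\lX_i = X^{v_i}$ by its binary support $v_i \in \mathbb{F}_2^n$ and collect these as the rows of a $k\times n$ matrix $V$ over $\mathbb{F}_2$. Minimality of the generating set together with $|G_k\ket{z}|=2^k$ forces $v_1,\ldots,v_k$ to be linearly independent, so $V$ has full row rank. A CX gate with control $i$ and target $j$ conjugates $X^v$ into the string obtained by replacing $v_j$ with $v_j\oplus v_i$, i.e.\ it implements an elementary column operation on $V$. Standard Gauss--Jordan column reduction of a rank-$k$ matrix therefore produces a CX circuit $C$ of $\Order{nk}$ gates with $CVC^\dagger$-image $[\,I_k\mid 0\,]$, equivalently $C\lX_i C^\dagger = X_i$ for $1\le i\le k$. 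Acting with $C$ on the orbit gives
\begin{equation*}
  C\,G_k\ket{z}=\{X_1^{b_1}\cdots X_k^{b_k}\,C\ket{z}\mid b\in\{0,1\}^k\}.
\end{equation*}
Writing $C\ket{z}=\ket{w}$, the first $k$ qubits of the image now range freely over $\{0,1\}^k$ while qubits $k+1,\ldots,n$ are pinned to $w_{k+1},\ldots,w_n$. A layer of $X$ gates on the positions $j>k$ with $w_j=1$ zeroes out the lower register. Defining $M$ as this $X$-layer composed with $C$ yields a permutation built from $\Order{nk}$ $X,CX$ gates satisfying $MP_BM^\dagger = I_k\otimes\ket{0}\!\bra{0}_{n-k}$, which is exactly \eqref{eq:MPM_ggs}.

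For Part~2, since $M$ is Clifford, conjugation by $M$ sends any Pauli to a Pauli up to a global $\pm 1$: CX conjugation is sign-free on Paulis, while each $X$-conjugation flips the sign on every $Z$ factor it touches, and the cumulative effect is $M^\dagger\sigma M = \pm\sigma'$ with $\sigma'\in S_n$. The hypothesis $[\sigma,P_B]=0$ transports to $[\sigma',\,I_k\otimes\ket{0}\!\bra{0}_{n-k}]=0$. Splitting $\sigma'=\tau\otimes\rho$ across the $k\,|\,n-k$ cut, the commutation condition forces $\rho$ to preserve $\ket{0}\!\bra{0}_{n-k}$, and for a Pauli string this excludes any $X$ or $Y$ factor on the lower register (these would flip bits away from $0$); hence $\rho\in\{I,Z\}^{\otimes(n-k)}$. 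Because $Z^{b'}\ket{0}\!\bra{0}_{n-k}=\ket{0}\!\bra{0}_{n-k}$, one obtains
\begin{equation*}
  \sigma P_B = M\,\sigma'\,(I_k\otimes\ket{0}\!\bra{0}_{n-k})\,M^\dagger = \pm\,M\,(\tau\otimes\ket{0}\!\bra{0}_{n-k})\,M^\dagger,
\end{equation*}
with $\widehat{\sigma}\coloneqq\tau\in S_k$, giving \eqref{eq:SMPM_ggs}.

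The main obstacle is the quantitative bound in Part~1: I need to invoke an $\Order{nk}$-gate CNOT-synthesis routine that reduces a rank-$k$ binary matrix by elementary column operations, and, for the $\Order{\log(n)\,k}$ depth refinement needed by Theorem~\ref{theorem:main}, a parallelised variant of the same elimination performed on $k$ columns in blocks. Part~2 is then essentially bookkeeping once one observes that Clifford conjugation preserves the Pauli group and that a Pauli commuting with $\ket{0}\!\bra{0}$ on a register cannot carry $X$ or $Y$ there; the claimed $\pm$ prefactor is just the accumulated sign produced by the $X$-layer inside $M$.
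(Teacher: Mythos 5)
Your proposal is correct, and Part~1 takes a genuinely different route from the paper. The paper proves existence of $M$ by induction on $k$: the base case maps a two-element orbit to $\{\ket{0\cdots0},\ket{10\cdots0}\}$ using an $X$-layer followed by a GHZ-style CX fan-out ladder, and the induction step uses the orbit recursion $G_{k+1}\ket{z}=G_k\ket{z}\oplus G_k\lX_{k+1}\ket{z}$ to peel off one generator at a time, yielding $M$ as a product of $k$ blocks $X^{\z_j}M^{\z_j\oplus\w_j}_{GHZ}$ with gate count $\sum_{j=0}^{k-1}(n-j)=\Order{nk}$. You instead encode the generators as a rank-$k$ binary matrix and realize the reduction to $[\,I_k\mid 0\,]$ by CNOT-implemented column operations; this is essentially the same algorithm reorganized as Gaussian elimination, and it is arguably the cleaner way to see why the construction works and why linear independence of the supports is what matters. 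What the paper's explicit GHZ-ladder schedule buys is the depth bound $\Order{k\log n}$ (each fan-out can be done in logarithmic depth), which is not part of the present theorem's statement but is used in Theorem~\ref{theorem:main} and the subsequent corollary; you correctly flag that your generic elimination would need a parallelized variant to recover it. For Part~2 your argument coincides with the paper's --- propagate $\sigma$ through the $X$/CX circuit $M$ to obtain $\pm\widetilde\sigma$ and use $[\widetilde\sigma,\,I_k\otimes\ket{x}\bra{x}]=0$ --- except that you spell out the final step (a Pauli commuting with $\ket{0}\bra{0}$ on a register carries no $X$ or $Y$ there, and the residual $Z$'s act trivially on $\ket{0}$) in more detail than the paper's one-line conclusion. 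Both proofs are constructive and give the claimed $\Order{nk}$ count.
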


The proof is constructive and leads directly to an efficient algorithm to construct the circuit for both~\eqref{eq:MPM_ggs} and~\eqref{eq:SMPM_ggs}.
\begin{proof}

We will start by proving the first assertion by induction.

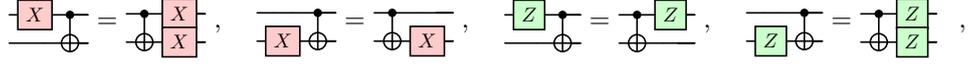
\begin{figure}
    \centering
    \begin{tikzpicture}[remember picture, baseline={0cm}]
    \node[scale=.75] {
    \begin{quantikz}[row sep={0.5cm,between origins},column sep=1ex]
         & \gate[style={fill=red!20}]{X} & \ctrl{1} & \qw \\
         & \phantomgate{X} & \targ{} & \qw
    \end{quantikz}
    };
\end{tikzpicture}
    \kern-0.5em  = \kern-0.5em
\begin{tikzpicture}[remember picture, baseline={0cm}]
    \node[scale=.75] {
    \begin{quantikz}[row sep={0.5cm,between origins},column sep=1ex]
        & \ctrl{1} & \gate[style={fill=red!20}]{X} & \qw \\
        & \targ{} & \gate[style={fill=red!20}]{X} & \qw
    \end{quantikz}
    };
\end{tikzpicture}
    \kern-0.5em , \kern0.5em
\begin{tikzpicture}[remember picture, baseline={0cm}]
    \node[scale=.75] {
    \begin{quantikz}[row sep={0.5cm,between origins},column sep=1ex]
         & \phantomgate{X} & \ctrl{1} & \qw \\
         & \gate[style={fill=red!20}]{X} & \targ{} & \qw
    \end{quantikz}
    };
\end{tikzpicture}
    \kern-0.5em  = \kern-0.5em
\begin{tikzpicture}[remember picture, baseline={0cm}]
    \node[scale=.75] {
    \begin{quantikz}[row sep={0.5cm,between origins},column sep=1ex]
        & \ctrl{1} & \phantomgate{X} & \qw \\
        & \targ{} & \gate[style={fill=red!20}]{X} & \qw
    \end{quantikz}
    };
\end{tikzpicture}
    \kern-0.5em , \kern0.5em
\begin{tikzpicture}[remember picture, baseline={0cm}]
    \node[scale=.75] {
    \begin{quantikz}[row sep={0.5cm,between origins},column sep=1ex]
         & \gate[style={fill=green!20}]{Z} & \ctrl{1} & \qw \\
         & \phantomgate{Z} & \targ{} & \qw
    \end{quantikz}
    };
\end{tikzpicture}
    \kern-0.5em  = \kern-0.5em
\begin{tikzpicture}[remember picture, baseline={0cm}]
    \node[scale=.75] {
    \begin{quantikz}[row sep={0.5cm,between origins},column sep=1ex]
        & \ctrl{1} & \gate[style={fill=green!20}]{Z} & \qw \\
        & \targ{} & \phantomgate{Z} & \qw
    \end{quantikz}
    };
\end{tikzpicture}
    \kern-0.5em , \kern0.5em
\begin{tikzpicture}[remember picture, baseline={0cm}]
    \node[scale=.75] {
    \begin{quantikz}[row sep={0.5cm,between origins},column sep=1ex]
         & \phantomgate{Z} & \ctrl{1} & \qw \\
         & \gate[style={fill=green!20}]{Z} & \targ{} & \qw
    \end{quantikz}
    };
\end{tikzpicture}
    \kern-0.5em  = \kern-0.5em
\begin{tikzpicture}[remember picture, baseline={0cm}]
    \node[scale=.75] {
    \begin{quantikz}[row sep={0.5cm,between origins},column sep=1ex]
        & \ctrl{1} & \gate[style={fill=green!20}]{Z} & \qw \\
        & \targ{} & \gate[style={fill=green!20}]{Z} & \qw
    \end{quantikz}
    };
\end{tikzpicture}
,
    \caption{Permutation rules for the CX gate and Pauli operators, often presented in the context of error propagation.
    }
    \label{fig:commrulesCX}
\end{figure}

\textbf{Base case $k=1$.}
The subspace is given by $G_1\ket{z} = \{\ket{z}, \ket{w} = \lX_1 \ket{z} \}$.
Applying a Pauli $X$ gate on the indices where $z$ is $1$ we are left with the states $\ket{0 \cdots 0}$ and $\ket{z\oplus w}$.
Looking at the indices where $z\oplus w$ is $1$, we see that we can use the ``$CX$-stairs'' from the GHZ state preparation circuit to map $\ket{0 \cdots 0}$ to itself and $\ket{z\oplus w}$ to $\ket{1 0 \cdots 0}$, where we relabel the indices in case the $1$ in $z \oplus w$ is not in the first index.
In total, we get the form shown in Equation~\eqref{eq:MPM_ggs} with the identity operator on the first qubit. We need at most $\Order{n}$ X, CX gates to realize $M$. 

\textbf{Induction step $k \rightarrow k+1$.}
Let the assumption hold for $k$, i.e. $\exists M_k$ constructed by $X, CX$ gates, such that for $G_k = \langle \lX_1, \cdots, \lX_k\rangle$ we have
\begin{equation}
    \label{eq:G_k}
    M_k \sum_{\ket{w} \in G_k\ket{z}} \ket{w} \bra{w} M_k^\dag = I_k \otimes \ket{x}\bra{x}.
\end{equation}
Since $M_k$ can be realized with only $X, CX$ gates, we know that we can propagate Pauli-X gates through $M_k$.
From this it follows trivially that
\begin{equation}
    \label{eq:XG_k}
    M_k \sum_{\ket{w} \in G_k\ket{z}} (\lX_{k_1}\ket{w} \bra{w}\lX_{k_1}) M_k^\dag = \hat{\lX}_{k_1} I_k \otimes \ket{x}\bra{x}\hat{\lX}_{k_1} = I_k \otimes \ket{\hat{x}}\bra{\hat{x}},
\end{equation}
where $\hat{x}$ is a bit-string.
We can add Equations~\eqref{eq:G_k} and ~\eqref{eq:XG_k}, using Equation~\eqref{eq:ggs_structure} to see that
\begin{equation*}
    M_k \sum_{\ket{w} \in G_{k+1}\ket{z}} \ket{w} \bra{w} M_k^\dag = I_k \otimes (\ket{x}\bra{x} + \ket{\hat{x}}\bra{\hat{x}}).
\end{equation*}
Using the base case, we know there exists an $\widetilde M$ which can be realized with only X and CX gates, such that 
$\widetilde M I_k \otimes \left(\ket{x}\bra{x} + \ket{\hat{x}}\bra{\hat{x}}\right)\widetilde M^\dag = I_{k+1} \otimes \ket{y}\bra{y}$.
Setting $M_{k+1} = M_k \widetilde M$ shows the existence of the permutation operator.
Overall, the circuit for $M$ can be realized with
\begin{equation}
\label{eq:Mghz}
    M = \begin{quantikz}[row sep={0.5cm,between origins},column sep=1ex, wire types={n,n,n,n,n,n}]
         \arrow[arrows]{rrrrr}
               & \gate[wires=6,style={text height=3cm, text width=.65cm,text depth=0}]{X^{\z_1}}
                    & \gate[wires=6,style={text height=3cm, text width=1.3cm,text depth=0}]{M^{\z_1\oplus \w_1}_{GHZ}}
                        & \gate[wires=5,style={text height=2.5cm, text width=.65cm,text depth=0}]{X^{\z_2}}
                            & \gate[wires=5,style={text height=2.5cm, text width=1.3cm,text depth=0}]{M^{\z_2\oplus \w_2}_{GHZ}}
                                &\ldots
                                        & \gate[wires=3,style={text height=1.5cm, text width=.65cm,text depth=0}]{X^{\z_k}}
                                            & \gate[wires=3,style={text height=1.5cm, text width=1.3cm,text depth=0}]{M^{\z_k\oplus \w_k}_{GHZ}}
                                                & \arrow[arrows]{lll}\\
         \vdots&    &   &   &   &       &   &   & \\
         \arrow[arrows]{rrrrr}
               &    &   &   &   &\ldots &   &   & \arrow[arrows]{lll} \\
         \vdots&    &   &   &   &\raisebox{2mm}{$\adots$}
                                        &   &   & \\
         \arrow[arrows]{rrrrrrrr}
               &    &   &   &   &       &   &   & \\
         \arrow[arrows]{rrrrrrrr}
               &    &   &   &   &       &   &   &
    \end{quantikz},
\end{equation}
where $M^\z_{GHZ}$ is the circuit realizing the GHZ state for the states where $\z$ is one, \textit{without} the Hadamard gate.
Note, that each individual circuit for $M^\z_{GHZ}$ can be realized with logarithmic depth in the number of qubits~\cite{cruz2019efficient}.
Of course, all $X$ gates can be permuted to the front.
In total, 
the number of $X, CX$ gates scales as
$
\sum_{j=0}^{k-1} (n - j) = k \cdot n - \frac{k(k - 1)}{2}
$,
and the depth scales as 
$
\sum_{j=0}^{k-1} \log(n - j) = \log \left( \frac{n!}{(n - k)!} \right)
$.

We will now proof the second assertion. 
Since $M$ is realized with $X,CX$ gates, there exists a Pauli string $\widetilde\sigma$ such that
$\sigma M = M \widetilde\sigma$. See Figure~\ref{fig:commrulesCX} showing the rules that can be applied for the CX-gate.
Using the commutation relation, we have that
\begin{equation*}
    \left( I_k \otimes \ket{x} \bra{x}_B\right) \widetilde \sigma = M^\dag P \sigma M =
    M^\dag \sigma P M = \widetilde \sigma \left(I_k \otimes \ket{x} \bra{x}_B\right) .
\end{equation*}
Since $\ket{x}$ is not the zero vector, it follows directly that $\restr{\widetilde \sigma}{B} \ket{x}\bra{x} =\pm \ket{x}\bra{x}$.
\end{proof}

\begin{corollary}
    Let $B=G_k\ket{z}$ so that $\Span{B}$ is a Pauli X-orbit generated subspace of dimension $2^k$ and $P_B$ its projector.
    Let $\sigma$ be a Pauli string that commutes with $P_B$.
    Then there exists a permutation operator $M$ and a Pauli string $\sigma\in S_k$ such that
    \begin{equation*}
        e^{it\sigma P_B} =
        M C_{x}^{\c} R_{\sigma,\t}(\pm t) M^\dag,
    \end{equation*}
    with control $\c=(k+1,\ldots,n)$ and target $\t= (1,\ldots, k)$. 
    The permutation $M$ can be realized with $\Order{n \lognnze}$ $X,CX$ gates with circuit depth $\Order{\log(n)\lognnze}$.
    The multi-controlled $\sigma$ gate can be realized with $\Order{n}$ $X,CX$ gates with circuit depth $\Order{n}$ and $\Order{\log(1/\epsilon)}$ T gates.
\end{corollary}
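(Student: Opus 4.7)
The plan is to assemble the statement directly from the two ingredients already proved: the identity $e^{it\sigma P_B} = C_B e^{it\sigma}$ from the preceding lemma, and the conjugation relation in the second assertion of Theorem~\ref{theorem:GroupGeneratedCase}. First I would invoke that theorem to obtain a permutation $M$, built entirely out of $X$ and $CX$ gates, together with a Pauli string $\widehat\sigma \in S_k$ such that $M^\dag (\sigma P_B)\, M = \pm\, \widehat\sigma \otimes \ket{x}\bra{x}_{n-k}$ for a fixed computational basis state $\ket{x}$ on the last $n-k$ qubits. Since conjugation commutes with the matrix exponential, this immediately rewrites the problem as the evaluation of $M^\dag e^{it\sigma P_B} M = e^{\pm it\,\widehat\sigma \otimes \ket{x}\bra{x}}$.

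The next step is to identify the right-hand side as a multi-controlled Pauli exponential. Since $\ket{x}\bra{x}$ is a projector, $(\widehat\sigma \otimes \ket{x}\bra{x})^j = \widehat\sigma^j \otimes \ket{x}\bra{x}$ for every $j\geq 1$, and the Taylor-series manipulation used in the proof of the preceding lemma yields $e^{\pm it\,\widehat\sigma \otimes \ket{x}\bra{x}} = I\otimes (I-\ket{x}\bra{x}) + e^{\pm it \widehat\sigma}\otimes \ket{x}\bra{x}$. This is exactly the subspace-controlled Pauli rotation $C^{\c}_{x} R_{\widehat\sigma,\t}(\pm t)$ with control register $\c = (k+1,\ldots,n)$ and target register $\t=(1,\ldots,k)$; any convention factor between $e^{\pm it\widehat\sigma}$ and the canonical rotation-gate parameterization is absorbed into the sign in front of $t$. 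Conjugating back with $M$ yields the claimed factorization.

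The resource estimates are then immediate from results already in the paper. For $M$, the first part of Theorem~\ref{theorem:GroupGeneratedCase} together with the GHZ-stairs construction in Equation~\eqref{eq:Mghz} gives $\Order{nk}$ $X,CX$ gates with depth $\Order{k\log(n)}$, which agree with the stated bounds $\Order{n\lognnze}$ and $\Order{\log(n)\lognnze}$ since $k = \lognnze$. The inner block $C^{\c}_{x} R_{\widehat\sigma,\t}(\pm t)$ is the $C^{n-k}e^{it\sigma}$ entry of Table~\ref{tab:resource_estimates} and is therefore realizable with $\Order{n}$ $CX$ gates, depth $\Order{n}$, and $\Order{\log(1/\epsilon)}$ T gates coming from the single $R_Z$ rotation in Figure~\ref{fig:Pauliexp}. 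The only non-trivial bookkeeping, and the main place one has to be careful, is tracking the global $\pm$ sign that arises from propagating $\sigma$ through the $X,CX$ layers of $M$ via the rules in Figure~\ref{fig:commrulesCX}; this is wholly contained in the second assertion of Theorem~\ref{theorem:GroupGeneratedCase} and merely reparameterizes the rotation angle, so no additional argument is needed.
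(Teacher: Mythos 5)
Your proposal is correct and follows essentially the same route as the paper: invoke Theorem~\ref{theorem:GroupGeneratedCase} to conjugate $\sigma P_B$ into $\pm\,\widehat\sigma\otimes\ket{x}\bra{x}$, exponentiate, and recognize the result as a state-controlled Pauli rotation, with the resource counts read off from the theorem and Table~\ref{tab:resource_estimates}. The only difference is that you spell out the intermediate identity $e^{\pm it\,\widehat\sigma\otimes\ket{x}\bra{x}} = I\otimes(I-\ket{x}\bra{x}) + e^{\pm it\widehat\sigma}\otimes\ket{x}\bra{x}$ explicitly, which the paper leaves implicit.
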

\begin{proof}
The assertion follows directly from Theorem~\ref{theorem:GroupGeneratedCase}, which gives us the existence of $M$ consisting of $\Order{n \lognnze}$ $X,CX$ gates with depth $\Order{\log(n) \lognnze}$ depth such that
\begin{equation*}
        e^{it \sigma P_B}  = 
        M  e^{i t (\pm \widehat \sigma) \otimes \ket{\x}\bra{\x}} M^\dag,
\end{equation*}
where $\sigma$ is a Pauli string acting on $k$ qubits.
The exponential of $(\pm \widehat \sigma) \otimes \ket{\x}\bra{\x}$ can be realized with a $\sigma$ rotation controlled by the state $\ket{\x}$.
\end{proof}

\subsection{Proof for general subspaces}
\label{sec:generalcase}

We divide the proof of Theorem~\ref{theorem:main} for a general subspace into four cases as follows.

    \subsubsection{Case \texorpdfstring{$\sigma=I$}{I}}
    We start by mapping the (indexed) states of $B$ to the first $\nnze{}$ computational basis states
    which can be achieved by
    \begin{equation*}
        M=\prod_{j=0}^{\nnze-1} T_{j,z_j}.
    \end{equation*}
    It follows directly that
    \begin{equation*}
        M P_B M^\dagger \ket{j} = M P_B \ket{z_j} = 
        \begin{cases}
            M \ket{z_j} = \ket{j},& \text{ for } j< \nnze{},\\
            \mathbf{0},& \text{ otherwise.}
        \end{cases}
    \end{equation*}
    Therefore, we have that
    $
        MP_{B}M^{\dag} 
        = \sum_{j< \nnze{}}\ket{j}\bra{j},
    $
    from which it follows that the exponential has the form
    \begin{equation*}
        e^{itP_B} =
        M^\dag e^{it\sum_{j< \nnze{}}\ket{j}\bra{j}} M =
        M^\dag C_{\leq \nnze{}}^\c\Ph(t) M
    \end{equation*}
    where $\c=(1,\ldots, n)$.
    
    \subsubsection{Case \texorpdfstring{$\sigma=Z^\b$}{Z}}
    For $s\in\{+,-\}$ we define
    \begin{equation*}
        B_s \coloneqq \{ \ket{\z} \in B \ | \ Z^\b\ket{\z} = s\ket{\z} \}.
    \end{equation*}
    Furthermore, after indexing the states in $B_s$ we define the permutation matrices
    \begin{equation*}
        M_s = \prod_{\substack{j =0,\\ \ket{\z_j}\in B_s}}^{|B_s|-1} T_{j,z_j}.
    \end{equation*}
    It follows that
    \begin{equation*}
        M_s Z^\b P_{B_s} M^\dag_s \ket{j} = 
        M_s Z^\b P_{B_s} \ket{z_j} = 
        \begin{cases}
            M_s Z^\b \ket{z_j} = s \ket{j},& \text{ for } j< |B_s|,\\
            \mathbf{0},& \text{ otherwise.}
        \end{cases}
    \end{equation*}
    Write $P_B=P_{B_+}+P_{B_-}$ and using that $[Z^\b P_{B_-},Z^\b P_{B_+}]=0$ since it involves diagonal matrices,
    we have
    \begin{equation*}
        e^{it Z^\b P_B} =
        \prod_{s\in\{+,-\} } e^{it Z^\b P_{B_s}} =
        \prod_{s\in\{+,-\} } M_s^\dag e^{it P_{B_s}}M_s =
        \prod_{s\in\{+,-\} } M_s^\dag C_{\leq |B_s|}^\c\Ph(s t) M_s,
    \end{equation*}
    which proofs the assertion.
    
    \subsubsection{Case \texorpdfstring{$X^\a \neq I,\left [ X_\sigma^{\alpha},Z^{\beta}_\sigma\right]\neq 0$}{X not commuting}}
    We define the set of ordered pairs
    \begin{equation*}
        E \coloneqq \{ (\ket{x},\ket{y}) \in B\times B \ | \ X^\a\ket{x} = \ket{y}, Z^\b\ket{x} = +\ket{x} \}.
    \end{equation*}
    For $(\ket{x},\ket{y})\in E$ we have that
    $Z^\b \ket{y} = Z^\b X^\a \ket{x} = - X^\a Z^\b \ket{x} = X^\a \ket{x}=-\ket{y}$ which is the defining property for the ordering of the pairs in $E$.
    The matrix $\sigma$ acts on $(\ket{\x},\ket{\y})\in E$ in the following way
    \begin{equation*}
        \begin{split}
            \sigma \ket{x} &= i^{\a\cdot \b} X^\a Z^\b \ket{x} = 
            i^{\a\cdot \b} X^\a \ket{x} = 
            i^{\a\cdot \b} \ket{y},\\
            \sigma \ket{y} &= i^{\a\cdot \b} X^\a Z^\b \ket{y} = 
            -i^{\a\cdot \b} X^\a \ket{x} = 
            -i^{\a\cdot \b} \ket{x}.
        \end{split}
    \end{equation*}
    Again for $(\ket{\x},\ket{\y})\in E$ we have $\ket{x} = \sigma^2 \ket{x} = \sigma i^{\a\cdot \b} \ket{y} =  -  i^{2 (\a\cdot \b)} \ket{x}$, from which it follows that $\a \cdot \b = 1 \!\! \pmod{4}$ or $\a \cdot \b = 3 \!\! \pmod{4}$,
    which means that $i^{\a\cdot \b} = \pm i$.
    After indexing the pairs in $E$
    we define the permutation matrix
    \begin{equation*}
        M = \prod_{\substack{j=0,\\ (\ket{\x_j},\ket{\y_j})\in E}}^{|E|-1} T_{2j,\x_j} T_{2j+1, \y_j}.
    \end{equation*}
    From this it follows that
    \begin{equation*}
        M \sigma P_B M^\dag \ket{j}
        = 
        \begin{cases}
            M \sigma \ket{x_j} = \pm i M \ket{y_j} = \pm i \ket{j+1},& \text{ for } j< |E|, \ j \! \! \! \! \pmod{2} = 0,\\ 
            M \sigma \ket{y_j} = \mp i M \ket{x_j} = \mp i \ket{j-1},& \text{ for } j< |E|, \ j \! \! \! \! \pmod{2} = 1,\\ 
            \mathbf{0},& \text{ for } j\geq|E|,
        \end{cases}
    \end{equation*}
    i.e. $
        M \sigma P_B M^\dag = \pm \sum_{j=0}^{|E|-1} i \left( \ket{2j}\bra{2j+1} -  \ket{2j+1}\bra{2j}\right)
        = \pm \sum_{j<\nnze{}/2}\ket{j}\bra{j}_{n-1}\otimes Y 
    $.
    Hence, the exponential has the form
    \begin{equation*}
        e^{it\sigma P_B} =
        M^\dag e^{it\sum_{j<\nnze{}/2} \ket{j}\bra{j}_{n-1}\otimes (\pm Y) )} M =
        M^\dag C_{\leq \nnze{}/2}^\c\RY[\t](\pm t) M
    \end{equation*}
    where $\c=(1,\ldots, n-1)$, and $\t=(n)$.

    \subsubsection{Case \texorpdfstring{$X^\a \neq I,\left [ X_\sigma^{\alpha},Z^{\beta}_\sigma\right]=0$}{X commuting}}
    We define the set of unordered pairs
    \begin{equation*}
        E_s \coloneqq \{ \{\ket{x},\ket{y}\} \in B\times B \ | \ X^\a\ket{x} = \ket{y}, Z^\b\ket{x} = s\ket{x} \},
    \end{equation*}
    where $s\in\{+,-\}$.
    The matrix $\sigma$ acts on $\{\ket{\x},\ket{\y}\}\in E_s$ in the following way
    \begin{equation*}
            \sigma \ket{x} = i^{\a\cdot \b} X^\a Z^\b \ket{x} = 
            s i^{\a\cdot \b} \ket{y},
    \end{equation*}
    and likewise for
    $\sigma \ket{y} = s i^{\a\cdot \b} \ket{x}$.
    Furthermore, for $(\ket{\x},\ket{\y})\in E_s$ we have $\ket{x} = \sigma^2 \ket{x} = s \sigma i^{\a\cdot \b} \ket{y} =  i^{2 (\a\cdot \b)} \ket{x}$, from which it follows that $\a \cdot \b = 0 \!\! \pmod{4}$ or $\a \cdot \b = 2 \!\! \pmod{4}$,
    which means that $i^{\a\cdot \b} = \pm 1$.
    After indexing the pairs in $E$
    we define the permutation matrices
    \begin{equation*}
        M_s = \prod_{\substack{j=0,\\ \{\ket{\x_j},\ket{\y_j}\}\in E_s}}^{|E_s|-1} T_{2j,\x_j} T_{2j+1, \y_j}.
    \end{equation*}
    From this it follows that
    \begin{equation*}
        M_s \sigma P_{B_s} M_s^\dag \ket{j}
        = 
        \begin{cases}
            M \sigma \ket{x_j} = \pm s M \ket{y_j} = \pm s \ket{j+1},& \text{ for } j< |E|, \ j \! \! \! \! \pmod{2} = 0,\\ 
            M \sigma \ket{y_j} = \pm s M \ket{x_j} = \pm s \ket{j-1},& \text{ for } j< |E|, \ j \! \! \! \! \pmod{2} = 1,\\ 
            \mathbf{0},& \text{ for } j\geq|E|,
        \end{cases}
    \end{equation*}
    i.e. $
        M_s \sigma P_{B_s} M_s^\dag = \pm s \sum_{j=0}^{|E_s|-1} \left( \ket{2j}\bra{2j+1} +  \ket{2j+1}\bra{2j}\right)
        = \pm s \sum_{j<|B_s|/2}\ket{j}\bra{j}_{n-1}\otimes X 
    $.
    Define
    $
        B_\pm = \{\ket{x} | \{\ket{x},\ket{y}\} \in E_\pm \}\}.
        $
    and write $P_B=P_{B_+}+P_{B_-}$. Note, that $B_+\cap B_- = \{\}$.
    Observe that
    \begin{equation*}
        \sigma P_{B_s}
        = 
        \sigma\sum_{\{\ket{x},\ket{y}\} \in E_s} \left(\ket{x}\bra{x} + \ket{y}\bra{y}\right)
        = 
        \pm s \sum_{\{\ket{x},\ket{y}\} \in E_s} \left(\ket{y}\bra{x} + \ket{x}\bra{y}\right)
        = 
        P_{B_s} \sigma,
    \end{equation*}
    which leads to $ 
    [\sigma P_{B_-}, \sigma P_{B_+}] = \sigma[P_{B_-}, \sigma] P_{B_+} + \sigma\sigma[P_{B_-}, P_{B_+}] + [\sigma,\sigma] P_{B_+}P_{B_-} + \sigma[\sigma,  P_{B_+}]P_{B_-}
    =0$.
    Therefore, we have that
    \begin{equation*}
        e^{it \sigma P_B} =
        \prod_{s\in\{+,-\} } e^{it \sigma P_{B_s}} =
        \prod_{s\in\{+,-\} } M_s^\dag C_{\leq |B_s|/2}^\c\RX[\t](\pm s t) M_s,
    \end{equation*}
    where $\c=(1,\ldots, n-1)$, and $\t=(n)$.

This concludes the proof and we continue with example applications.

\section{Examples}
\label{sec:examples}

\subsection{Transposition gates}

As an example we can apply Theorem~\ref{theorem:GroupGeneratedCase} to the transposition gate which can be written as,
\begin{equation*}
     T_{\x,\y} = X^{\x\oplus \y} P_{\{\x,\y\}},
\end{equation*}
i.e. we can apply Theorem~\ref{theorem:GroupGeneratedCase} for $\sigma = X^{\x\oplus \y}$ and $B=\langle X^{\x\oplus \y} \rangle \ket{\x}$.
Since the subspace is generated by a Pauli X-orbit, according to Equation~\eqref{eq:Mghz} $M = X^{\x} M^{\x\oplus \y}_{GHZ}$ consists of at most $\Order{n}$ X and CX gates and has depth $\Order{\log(n)}$.
Compared to the method proposed in \cite{herbert2024almost}
this approach reduces the resources to realize the transposition gate, by not requiring an ancilla qubit and using only one multi-controlled Toffoli gate (with one less control).
We present two typical examples for $4$ and $3$ qubits realizing a transposition gate between two computational basis states through
\begin{equation*}
\begin{split}
    T_{0000,1111}&=
    \begin{quantikz}[row sep={0.4cm,between origins},column sep=1ex]
    \qw & \ctrl{1}
    \gategroup[4,steps=2,style={dashed,rounded corners,fill=blue!10, inner xsep=0pt},background, label style={label position=below, yshift=-0.5cm}]{
    $M$
    }
                    & \ctrl{3} & \targ{}    & \ctrl{3}
    \gategroup[4,steps=2,style={dashed,rounded corners,fill=blue!10, inner xsep=0pt},background, label style={label position=below, yshift=-0.5cm}]{
    $M^\dag$
    }
                                                       & \ctrl{1}  & \qw \\
    \qw & \targ{}   & \qw      & \octrl{-1} & \qw      & \targ{}   & \qw \\
    \qw & \targ{}   & \qw      & \octrl{-1} & \qw      & \targ{}   & \qw \\
    \qw & \ctrl{-1} & \targ{}  & \octrl{-1} & \targ{}  & \ctrl{-1} & \qw 
    \end{quantikz}
    =
    \begin{quantikz}[row sep={0.4cm,between origins},column sep=1ex]
     & &          &           &           &           & \targ{}   & \octrl{1} & \ctrl{1} & \targ{}   &           &           &           &          &\\
     &  &          &           &           & \targ{}   &           & \octrl{1} & \ctrl{1} &           & \targ{}   &           &           &          &\\
     & &          &           & \targ{}   &           &           & \octrl{1} & \ctrl{1} &           &           & \targ{}   &           &          &\\
     & &          & \targ{}   &           &           &           & \octrl{1} & \ctrl{1} &           &           &           & \targ{}   &          &\\
     \lstick{$\ket{0}$}
     & & \gate{H} & \ctrl{-1} & \ctrl{-2} & \ctrl{-3} & \ctrl{-4} & \targ{}   & \targ{}  & \ctrl{-4} & \ctrl{-3} & \ctrl{-2} & \ctrl{-1} & \gate{H}& 
    \end{quantikz},
    \\
    T_{010,101}&=
    \begin{quantikz}[row sep={0.4cm,between origins},column sep=1ex]
    \qw &  
    \gategroup[3,steps=2,style={dashed,rounded corners,fill=blue!10, inner xsep=0pt},background, label style={label position=below, yshift=-0.5cm}]{
    $M$
    }
                    & \ctrl{1} & \targ{}    & \ctrl{1}
    \gategroup[3,steps=2,style={dashed,rounded corners,fill=blue!10, inner xsep=0pt},background, label style={label position=below, yshift=-0.5cm}]{
    $M^\dag$
    }
                                                       &           & \qw \\
    \qw & \gate{X}  & \targ{}  & \octrl{-1} & \targ{}  & \gate{X}  & \qw \\
    \qw &           &          & \octrl{-1} &          &           & \qw 
    \end{quantikz}
    =
    \begin{quantikz}[row sep={0.4cm,between origins},column sep=1ex]
     & &         &           & \targ{}   & \octrl{1} & \ctrl{1}  & \targ{}   &           &           &\\
     & &         & \targ{}   &           & \ctrl{1}  & \octrl{1} &           & \targ{}   &           &\\
     & &         &           &           & \octrl{1} & \octrl{1} &           &           &           &\\
     \lstick{$\ket{0}$}
     & & \gate{H} & \ctrl{-2}& \ctrl{-3} & \targ{}   & \targ{}   & \ctrl{-3} & \ctrl{-2} & \gate{H} & 
    \end{quantikz},
\end{split}
\end{equation*}
where the left is the proposed method, and the right is the one from~\cite{herbert2024almost}.

\subsection{Fermionic excitations}
The unitary coupled cluster ansatz~\cite{Hoffmann1988} for simulating molecules
requires to realize fermionic excitation operators defined by the exponential of the skew-Hermitian operators
\begin{equation*}
    T_i^k = a^\dag_k a_i - a^\dag_i a_k, \quad 
    T_{i,j}^{k,l} = a^\dag_k a^\dag_l a_i a_j - a^\dag_i a^\dag_j a_k a_l, \quad 
    \cdots, \quad
    T_{i_1, \cdots, i_n}^{k_1,\cdots,k_n} = 
    \prod_{j=1}^n a^\dag_{k_j} a_{i_j}
    -
    \prod_{j=1}^n a^\dag_{i_j} a_{k_j}
\end{equation*}
related to single, double, and higher excitation operators, respectively.
Here, $a_i^{\dag}$ and $a_i$ refer to the fermionic creation and annihilation operators.
The Jordan-Wigner mapping is given by
$a_i=Q_i\prod_{i=0}^{r-1}Z_r, \quad a_i^\dag=Q_i^\dag\prod_{i=0}^{r-1}Z_r$
with the qubit creation and annihilation operators defined as
$Q_i^{\dag}=\frac{1}{2}(X_i-iY_i)$ and $Q_i=\frac{1}{2}(X_i+iY_i)$.
In this case,
the $n$ qubit excitation operators can be expressed as 
\begin{equation}
\label{eq:nqubitexcitation}
    \widehat T_{i_1, \cdots, i_n}^{k_1,\cdots,k_n} =
    \prod_{j=1}^n Q^\dag_{k_j} Q_{i_j}
    -
    \prod_{j=1}^n Q^\dag_{i_j} Q_{k_j}
    =
    -i Z_{i_1} \prod_{j=1}^{n} X_{i_j} X_{k_j} G,
\end{equation}
where $G$ is the group generated by the minimal set of Pauli operators
$
    \left\{ Z_{i_j} Z_{i_{j+1}},\ Z_{k_j} Z_{k_{j+1}} \mid 1 \leq j \leq n - 1 \right\}
    \cup 
    \left\{ -Z_{i_n} Z_{k_1} \right\}
$,
and we define $\sigma H \coloneqq \allowbreak \frac{1}{|H|} \sum_{h\in H} \sigma h$ for a group $H = \langle \sigma_1, \cdots, \sigma_k\rangle$. 
Interpreting this through the lens of stabilizer codes used in quantum error correction, it is easy to see that the subspace of the projector $P_B$ is given by the
stabilizer subspace $B=\{ \ket{0\dots01\dots 1}, \allowbreak \ket{1\dots10\dots 0}\}$, and $\sigma$ is any logical $Y$-operator times $i$, e.g. 
for $X^{\alpha}_\sigma=X\cdots X$ and $Z^{\beta}_{\sigma}=ZI\cdots I$ in symplectic from.

The \textbf{n-qubit excitation operator} expressed in the Pauli basis as in Equation~\eqref{eq:nqubitexcitation} has $2^{2n-1}$ non-zero Pauli strings, meaning one needs an exponential number of $\RZ$ and therefore also T gates when the circuit is realized through Pauli evolution.
On the other hand, since the subspace is generated by a Pauli X-orbit, applying the proposed method yields a circuit composed of $M = X^{\x} M^{\mathbf{1}}_{GHZ}$ for $\x = 0\dots01\dots 1$ according to Equation~\eqref{eq:Mghz} and one
$n-1$ controlled $\RY$ gate. This leads to a circuit with $\Order{n}$ X, CX gates and $\Order{n+\log(\frac{1}{\epsilon})}$ T-gates for realizing the n-qubit excitation operator.

Note, that the fermionic unitaries can be efficiently obtain from the unitary evolution of qubit excitation operators~\cite{yordanov2020efficient}, using the circuit realization depicted in Figure~\ref{fig:Pauliexp}, so that it is sufficient to be able to realize qubit excitation operators efficiently.
 
As an example, one can realize the parametrized first qubit excitation operator through the circuit
\begin{align*}
    e^{t \widehat T_i^k} &
    =
    \begin{quantikz}[row sep={0.5cm,between origins},column sep=1ex]
        \qw & 
    \gategroup[4,steps=4,style={dashed,rounded corners,fill=blue!10, inner xsep=-2pt},background, label style={label position=below, yshift=-0.5cm}]{
    $M$
    }
                   &    & \ctrl{1}  & \ctrl{3} & \gate{\RY(t)}    & \ctrl{3}
    \gategroup[4,steps=4,style={dashed,rounded corners,fill=blue!10, inner xsep=-2pt},background, label style={label position=below, yshift=-0.5cm}]{
    $M^\dag$
    }
                                                                        & \ctrl{1}  & & &\qw \\
        \qw &        & & \targ{}   & \qw      & \octrl{-1} & \qw      & \targ{}   & & &\qw \\
        \qw &\gate{X} & & \targ{}   & \qw      & \octrl{-1} & \qw      & \targ{}   & &\gate{X}&\qw \\
        \qw &\gate{X} & & \ctrl{-1} & \targ{}  & \octrl[    ]{-1} & \targ{}  & \ctrl{-1} & &\gate{X} &\qw 
    \end{quantikz}
    = \prod_{i=1}^8
    \left [
    \begin{quantikz}[row sep={0.5cm,between origins},column sep=1ex]
        & \gate[4]{U_i} & \ctrl{1} &          &          &                    &          &          & \ctrl{1} & \gate[4]{U_i}  \\
        &             & \targ{}  & \ctrl{1} &          &                    &          & \ctrl{1} & \targ{}  &              \\
        &             &          & \targ{}  & \ctrl{1} &                    & \ctrl{1} & \targ{}  &          &              \\
        &             &          &          & \targ{}  & \gate{\RZ( 2t)} & \targ{}  &          &          &             
    \end{quantikz}
    \right ]
    ,
    \\
\end{align*}
and second qubit excitation operator through
\begin{align*}
        e^{t \widehat T_{i,j}^{k,l}} &
    =
    \begin{quantikz}[row sep={0.5cm,between origins},column sep=1ex]
        \qw& 
    \gategroup[6,steps=4,style={dashed,rounded corners,fill=blue!10, inner xsep=-2pt},background, label style={label position=below, yshift=-0.5cm}]{
    $M$
    }
                  & \ctrl{1}  & \ctrl{2} &\ctrl{5} & \gate{\RY(t)}    & \ctrl{5}
    \gategroup[6,steps=4,style={dashed,rounded corners,fill=blue!10, inner xsep=-2pt},background, label style={label position=below, yshift=-0.5cm}]{
    $M^\dag$
    }
                                   &\ctrl{2} & \ctrl{1} &         & \qw \\
        \qw&         & \targ{}   &          &\qw      & \octrl{-1}       & \qw      &         & \targ{}  &         & \qw \\
        \qw&         &           & \targ{}  &\qw      & \octrl{-1}       & \qw      &\targ{}  &          &         & \qw \\
        \qw&\gate{X} &           & \targ{}  &\qw      & \octrl{-1}       & \qw      &\targ{}  &          &\gate{X} & \qw \\
        \qw&\gate{X} & \targ{}   &          &\qw      & \octrl{-1}       & \qw      &         & \targ{}  &\gate{X} & \qw \\
        \qw&\gate{X} & \ctrl{-1} & \ctrl{-2}&\targ{}  &\octrl[    ]{-1} & \targ{}   &\ctrl{-2}& \ctrl{-1}&\gate{X} & \qw 
    \end{quantikz}
        = \prod_{i=1}^{32}
    \left [
    \begin{quantikz}[row sep={0.5cm,between origins},column sep=.3ex]
        & \gate[6]{U_i} & \ctrl{1} &          &          &          &          &                    &          &          &          &          & \ctrl{1} & \gate[6]{U_i}  \\
        &               & \targ{}  & \ctrl{1} &          &          &          &                    &          &          &          & \ctrl{1} & \targ{}  &                \\
        &               &          & \targ{}  & \ctrl{1} &          &          &                    &          &          & \ctrl{1} & \targ{}  &          &                \\
        &               &          &          & \targ{}  & \ctrl{1} &          &                    &          & \ctrl{1} & \targ{}  &          &          &                \\
        &               &          &          &          & \targ{}  & \ctrl{1} &                    & \ctrl{1} & \targ{}  &          &          &          &                \\
        &               &          &          &          &          & \targ{}  & \gate{\RZ(2 t)} & \targ{}  &          &          &          &          &               
    \end{quantikz}
    \right ]
    ,
\end{align*}
where the left is the proposed method and the right is through Pauli evolution of all non-zero Pauli strings.

\subsection{Trace gate for lattice gauge theory }

Here we want to construct quantum circuits realizing the trace gate proposed in \cite{Alam_2022} for the Dihedral group $D_{N}$ when $N=2^n$. The trace gate operator is a diagonal operator of the following form :
\begin{equation*}
    H_{\Tr}^{2^n}=\ket{0}\bra{0}\otimes\sum_{k=0}^{2^n-1}\cos(\frac{2\pi k}{2^n})\ket{k}\bra{k}.
\end{equation*}
Observe, that for $k>0$ the two states
$\{k,2^{n-1}+k\}$ as well as the two states $\{2^{n}-k,2^{n-1}-k\}$ are connected with a flip of the first bit.
Hence, the subspace can be generated by
\begin{equation*}
    P_k = \langle X_1, X^{\operatorname{bin}(k)\oplus\operatorname{bin}(2^{n}-k)}\rangle \ket{k}.
\end{equation*}
Looking at the relative moduli of the four connected states, this allows us to rewrite
\begin{equation*}
    H_{\Tr}^{2^n}=
    \ket{0}\bra{0}\otimes Z \otimes \ket{0}\bra{0}_{n-1}
    +
    \ket{0}\bra{0}\otimes\sum_{k=1}^{2^{n-2}-1}\cos(\frac{2\pi k}{2^n})\sigma P_k
\end{equation*}
with $\sigma = Z\otimes Z\otimes I_{n-2}$ and $P_k = I_2\otimes \ket{k}\bra{k}_{n-2}$.
It is important to notice that this implementation scales as $2^{n-2}$ but implements the exact operator reducing the complexity by a factor four applying phases for groups of 4 states.

As an example, the circuits for the trace gate for $n=2,3,4$ can be realized with
\begin{equation*}
    U_{\Tr}^4  = 
    \begin{quantikz}[row sep={0.5cm,between origins},column sep=.5ex]
         & \octrl{1}      & \\
         & \gate{\RZ (t)} & \\
         & \octrl{-1}     & 
    \end{quantikz},
    U_{\Tr}^8 = 
    \begin{quantikz}[row sep={0.5cm,between origins},column sep=.5ex, wire types = {n,n,n,n}]
        & \octrl{1}       & \octrl{1}       & \arrow[arrows]{lll}\\
        & \gate{\RZ(t_0)} & \gate[wires=2, style={inner xsep=-3pt}]{\RZZ(t_1)} & \arrow[arrows]{lll}\\
        & \octrl{-1}      &                 & \arrow[arrows]{lll}\\
        & \octrl{-2}      & \ctrl{-2}       & \arrow[arrows]{lll}
    \end{quantikz},
    U_{\Tr}^{16} = 
    \begin{quantikz}[row sep={0.5cm,between origins},column sep=.5ex, wire types = {n,n,n,n,n}]
        & \octrl{1}       &
                            \gategroup[5,steps=1,style={dashed,rounded corners,fill=blue!10, inner xsep=-3pt},background, label style={label position=below, yshift=-0.5cm}]{
                            $M$
                            }
                                   & \octrl{1}       & 
                            \gategroup[5,steps=1,style={dashed,rounded corners,fill=blue!10, inner xsep=-3pt},background, label style={label position=below, yshift=-0.5cm}]{
                            $M^\dag$
                            }
                                                                &\octrl{1}      &
                            \gategroup[5,steps=1,style={dashed,rounded corners,fill=blue!10, inner xsep=-3pt},background, label style={label position=below, yshift=-0.5cm}]{
                            $M$
                            }
                                                                                         &\octrl{1}       & 
                            \gategroup[5,steps=1,style={dashed,rounded corners,fill=blue!10, inner xsep=-3pt},background, label style={label position=below, yshift=-0.5cm}]{
                            $M^\dag$
                            }
                                                                                                                   & \arrow[arrows]{lllllllll}\\
        & \gate[style={inner xsep=-3pt}]{\RZ(t_0)} &        & \gate[wires=2, style={inner xsep=-3pt}]{\RZZ(t_1)} &          &\gate[wires=2, style={inner xsep=-3pt}]{\RZZ(t_2)}&        &\gate[wires=2, style={inner xsep=-3pt}]{\RZZ(t_3)} &        & \arrow[arrows]{lllllllll}\\
        & \octrl{-1}      &\ctrl{1}&                 &\ctrl{1}  &               &\ctrl{1}&                &\ctrl{1}& \arrow[arrows]{lllllllll}\\
        & \octrl{-1}      &\targ{} & \octrl{-2}      &\targ{}   &\ctrl{-2}      &\targ{} &\ctrl{-2}       &\targ{} & \arrow[arrows]{lllllllll}\\
        & \octrl{-1}      &        & \ctrl{-1}       &          &\octrl{-1}     &        &\ctrl{-1}       &        & \arrow[arrows]{lllllllll}
    \end{quantikz},
\end{equation*}
where $\RZZ$ can be realized with two CX gates, see Figure~\ref{fig:Pauliexp}.

\subsection{Oracle for \texorpdfstring{\maxkcut{k}}{MAX k-CUT}}
\label{sec:oraclemaxkcut}
Another example is the realization of the oracle operator for the \maxkcut{k} problem.
Given a weighted undirected graph $G = (V, E)$, the \maxkcut{k} problem seeks a partition of the vertex set $V$ into $k$ subsets that maximizes the total weight of edges connecting vertices in different subsets.
By assigning a label $x_i \in {1, \dots, k}$ to each vertex $i \in V$, the \maxkcut{k} cost function can be expressed as
\begin{equation*}
    \underset{\mathbf{x}\in\{1,\dots,k\}^n}{\max} C(\mathbf{x}), \qquad  C(\mathbf{x}) = \sum_{(i,j)\in E} 
    \begin{cases}
        w_{ij}, & \text{ if } x_i \neq x_j\\
        0, & \text{otherwise},
    \end{cases}
\end{equation*}
where $w_{ij}>0$ is the weight of the edge $(i,j)\in E$.
After encoding the labels into $L_k \coloneq \lceil log_2(k)\rceil$ qubits the resulting oracle can be written as the sum of local diagonal projectors, i.e.
$
    H_P = \sum_{e\in E} w_{e} H_{e},
    $
where each local term can be expressed as
\begin{equation*}
    H_{e} = 2 \widehat H_e - I, \ \widehat H_e 
     = \sum_{(i,j) \in \clr } \ket{i}\bra{i} \otimes \ket{j}\bra{j},
\end{equation*}
where 
$\clr = \{\text{sets of equivalent colors}\}$.
Notice that $\widehat H_e$ contains $\Order{k\max_{k}(|\clr_k|)}$ diagonal projection operators of the form $\ket{ij}\bra{ij}$
and can be implemented using the Theorem~\ref{theorem:main} for $\sigma=I$ resulting in a scaling of $\Order{nk\max_{k}(|\clr_k|)}$ X, and CX gates and $\Order{nk\max_{k}(|\clr_k|)+log(k\max_{k}(|\clr_k|)log(1/\epsilon)}$ T gates.

We provide an example we construct the $\widehat H_e$ for the \maxkcut{3} problem.
Defining $\clr=\{ \{0,0\},  \{1,1\},  \{2,2\},  \{3,3\},\allowbreak  \{2,3\},  \{3,2\} \}$ we 
see that $\widehat H_e = P_B$ for $B = \{\ket{0000}, \ket{0101}, \ket{1010}, \ket{1111}, \ket{1011}, \ket{1110} \}$ or in integer representation $ B = \{ \ket{0}, \ket{5}, \ket{10}, \ket{15}, \ket{11}, \ket{14} \}$.
Applying Theorem~\ref{theorem:main} directly gives the following circuit
\begin{equation*}
    e^{i t \widehat H_e} = 
    \begin{quantikz}[row sep={0.5cm,between origins},column sep=1ex, wire types = {n,n,n,n}]
        & \gate[wires=4, style={inner xsep=-3pt}]{T_{5,1}}
            \gategroup[4,steps=5,style={dashed,rounded corners,fill=blue!10, inner xsep=-2pt},background, label style={label position=below, yshift=-0.5cm}]{
            $M$
            }
          & \gate[wires=4, style={inner xsep=-3pt}]{T_{10,2}} 
            & \gate[wires=4, style={inner xsep=-3pt}]{T_{11,3}} 
              & \gate[wires=4, style={inner xsep=-3pt}]{T_{14,4}} 
                & \gate[wires=4, style={inner xsep=-3pt}]{T_{15,5}} 
                  & \ctrl{1}
                      & \octrl{1}
                                  & \gate[wires=4, style={inner xsep=-3pt}]{T_{5,15}} 
                                    \gategroup[4,steps=5,style={dashed,rounded corners,fill=blue!10, inner xsep=-2pt},background, label style={label position=below, yshift=-0.5cm}]{
                                    $M^\dag$
                                    }
                                    & \gate[wires=4, style={inner xsep=-3pt}]{T_{4,14}} 
                                      & \gate[wires=4, style={inner xsep=-3pt}]{T_{3,11}} 
                                        & \gate[wires=4, style={inner xsep=-3pt}]{T_{2,10}} 
                                          & \gate[wires=4, style={inner xsep=-3pt}]{T_{1,5}}
                                            & \arrow[arrows]{lllllllllllll}\\
        & & & & & &  \gate{\Ph(t)}    & \ctrl{1} &  & & & & & \arrow[arrows]{lllllllllllll}\\
        & & & & & &     &  \gate{\Ph(t)}         &  & & & & & \arrow[arrows]{lllllllllllll}\\
        & & & & & &
                    & 
                                 &  & & & & & \arrow[arrows]{lllllllllllll}
    \end{quantikz}.
\end{equation*}
By looking at the states carefully, we can apply transpose the states $\ket{0}$ to $\ket{12}$ and  $\ket{5}$ to $\ket{14}$ to arrive the circuit
\begin{equation*}
    e^{i t \widehat H_e} = 
    \begin{quantikz}[row sep={0.5cm,between origins},column sep=1ex, wire types = {n,n,n,n}]
        & \gate[wires=4, style={inner xsep=-3pt}]{T_{0,12}}
            \gategroup[4,steps=2,style={dashed,rounded corners,fill=blue!10, inner xsep=-2pt},background, label style={label position=below, yshift=-0.5cm}]{
            $M$
            }
          & \gate[wires=4, style={inner xsep=-3pt}]{T_{5,14}} 
                  & \octrl{1}
                      & \ctrl{1}
                                  & \gate[wires=4, style={inner xsep=-3pt}]{T_{14,5}} 
                                    \gategroup[4,steps=2,style={dashed,rounded corners,fill=blue!10, inner xsep=-2pt},background, label style={label position=below, yshift=-0.5cm}]{
                                    $M^\dag$
                                    }
                                    & \gate[wires=4, style={inner xsep=-3pt}]{T_{12,0}} 
                                            & \arrow[arrows]{lllllll}\\
        & & &  \gate{\Ph(t)}     & \octrl{1} &  & & \arrow[arrows]{lllllll}\\
        & & &   &  \gate{\Ph(t)}         &  & & \arrow[arrows]{lllllll}\\
        & & & 
                    & 
                                 &  & & \arrow[arrows]{lllllll}
    \end{quantikz}.
\end{equation*}

Following Remark~\ref{remark:directsumgg} an alternative is to divide $B$ into two sets that are generated by Pauli X-orbits, e.g.
$B_2 = \langle X_2X_4\rangle \ket{1011}$
and $B_1 = \langle X_1X_3, X_2X_4 \rangle \ket{0000}$.
We can then realize the oracle equivalently with the circuit 
\begin{equation*}
    e^{i t \widehat H_e} = 
    \begin{quantikz}[row sep={0.5cm,between origins},column sep=1ex, wire types = {n,n,n,n}]
        & 
            \gategroup[4,steps=2,style={dashed,rounded corners,fill=blue!10, inner xsep=-2pt},background, label style={label position=below, yshift=-0.5cm}]{
            $M_1$
            }
            & \ctrl{2}
                & 
                    & \ctrl{2}
                        \gategroup[4,steps=2,style={dashed,rounded corners,fill=blue!10, inner xsep=-2pt},background, label style={label position=below, yshift=-0.5cm}]{
                        $M_1^\dag$
                        }
                        &
                        &
                            &
                            \gategroup[4,steps=1,style={dashed,rounded corners,fill=blue!10, inner xsep=-2pt},background, label style={label position=below, yshift=-0.5cm}]{
                            $M_2$
                            }
                                & \gate{\Ph(t)}
                                    &
                                        \gategroup[4,steps=1,style={dashed,rounded corners,fill=blue!10, inner xsep=-2pt},background, label style={label position=below, yshift=-0.5cm}]{
                                        $M_2^\dag$
                                        }
                                        & \arrow[arrows]{llllllllll}\\
        & \ctrl{2}
            & 
                &
                    &
                        & \ctrl{2}
                        &
                            & \ctrl{2}
                                &
                                    & \ctrl{2}
                                        & \arrow[arrows]{llllllllll}\\
        &
            & \targ{} 
                & \gate{\Ph(t)}
                    & \targ{}
                        &
                        &
                            &
                                & \ctrl{-2}
                                    &
                                        & \arrow[arrows]{llllllllll}\\
        & \targ{} 
            &
                & \octrl{-1}
                    &
                        & \targ{}
                        &
                            & \targ{}
                                & \ctrl{-1}
                                    & \targ{}
                                            & \arrow[arrows]{llllllllll}
    \end{quantikz}.
\end{equation*}
Comparing this with the circuit from~\cite{fuchs2021efficient} for $k=3$ shows a drastic improvement in usage of ancilla qubits and gate counts.

\subsection{Mixers for Constrained Optimization}
Here, we want to show some examples of how to use the formalism introduce previously to construct constrained LX-mixers~\cite{fuchs2023LX}.
In this case we want construct mixing unitaries of the form
\begin{equation*}
    U_M(t) = e^{-i t H_M},
    \quad H_M = \sum_{j<k} (T)_{j,k} H_{z_j\leftrightarrow z_k},
    \quad H_{z_j\leftrightarrow z_k} = \ket*{z_j}\!\!\bra*{z_k} + \ket*{z_k}\!\!\bra*{z_j}.
\end{equation*}
The Hamiltonian $H_M$ is called valid~\cite{hadfield2019quantum} if the graph $G_T$ of the adjacency matrix $T$ is undirected and connected.
Given a feasible set $B$ we define the family of graphs $(G_{\lX})_{\lX \in \{I,X\}^n\setminus\{I\}^n}$ where $G_{\lX} = (B, E_{\lX})$.
This gives rise to a family of mixers $(H_{\lX})_{\lX \in \{I,X\}^n\setminus\{I\}^n}$ 
\begin{equation*}
H_{\lX}=
    \sum_{j<k} (T_{\lX})_{j,k} H_{z_j\leftrightarrow z_k} = \sum_{\{\ket{x},\ket{y}\}\in E_{\lX}} H_{x\leftrightarrow y}=\lX P_{V_{\lX}},
\end{equation*}
where $T_{\lX}$ is the adjacency matrix of the graph $G_{\lX}$. As we can see each term in $H_{\lX}$ correspond to an Hamiltonian of the form $\sigma P$ and so we can use the main result in the case $\sigma=X$ for implementing the real time evolution.

\subsubsection{Mixer for subspace generated by Pauli X-orbits}
In the specific case where the subspace generated by a Pauli X-orbit, i.e.
$B=\langle \lX_1, \cdots, \lX_k\rangle\ket{\z}$, for some reference state $\ket{\z}$, it is interesting to notice that the graph $G=\bigcup_{i=1}^kG_{\lX_i}$ is related to the valid adjacency of a $k$-regular graph with $2^k $ vertices.
In fact the graph is has a number of vertices equal to the dimension of the generated subspace $2^k$ and each vertex is connected to exactly k other vertices by the generators of the subgroup $\lX_i$. In particular, since the subspace is generated by a Pauli X-orbit, each generator can be propagated through the basis change M resulting in the following mixer:
\begin{equation*}
    U_M(t)
    = e^{it \sum_{j=1}^{k}\lX_j P_B}
    = e^{it\sum_{j=1}^kM\Tilde{\lX_j}\otimes\ket{0}\bra{0}_{n-k}M^{\dag}}
    = M\prod_{j=1}^{k}C_{\mathbf{0}}^{\c}e^{it\lX_j}M^{\dag}, \quad \c = (n,\cdots, n-k+1)
\end{equation*}

As an example we want to realize a mixing operator for the subspace given by $B=\{\ket{0000}, \ket{1010},\allowbreak \ket{0111},\allowbreak \ket{1101}\} = \langle X_1X_3, X_2X_3X_4\rangle \ket{0000}$.
We note that $P_B=1/4\langle Z_2Z_4, Z_1Z_3Z_4\rangle$.
Defining $\lX_1 = X_1X_3, \lX_2=X_2X_3X_4$ and following the algorithm in the proof of Theorem~\ref{theorem:GroupGeneratedCase}, we can realize
\begin{equation*}
    e^{i (t_1 \lX_1 + t_2 \lX_2) P_B} = 
    \begin{quantikz}[row sep={0.5cm,between origins},column sep=1ex]
    \qw & \ctrl{2} 
    \gategroup[4,steps=3,style={dashed,rounded corners,fill=blue!10, inner xsep=-2pt},background, label style={label position=below, yshift=-0.5cm}]{
    $M$
    }
                    &          &          &\gate{\RX(t_1)} &                 & 
    \gategroup[4,steps=3,style={dashed,rounded corners,fill=blue!10, inner xsep=-2pt},background, label style={label position=below, yshift=-0.5cm}]{
    $M^\dag$
    }
                                                                                        &          & \ctrl{2}  & \\
    \qw &           & \ctrl{1} & \ctrl{2} &                & \gate{\RX(t_2)} & \ctrl{2} & \ctrl{1} &           & \\
    \qw & \targ{}   & \targ{}  &          &\octrl{-2}      & \octrl{-1}      &          & \targ{}  & \targ{}   & \\
    \qw &           &          & \targ{}  &\octrl{-1}      & \octrl{-1}      & \targ{}  &          &           & 
    \end{quantikz}
    ,
\end{equation*}
If we on the other hand apply the LX-mixer~\cite{fuchs2023LX} to the problem, we realize $\lX_1$ and $\lX_2$ with the time evolution of four Pauli terms each, which leads to 8 $\RZ$ gates.

\subsubsection{LX-mixer }
Next, we want to create a mixer for the subspace $B=\{\ket{0000},\ket{1000},\ket{0100},\ket{1100},\ket{0010},\ket{1010}\}$ mixing the pairs connected by $\sigma=X_1$
Let's start by noticing that $P_B=\mathbb{I}\otimes\sum_{z'\in B'}\ket{z'}\bra{z'}$ where $B'=\{\ket{000},\ket{100},\ket{010}\}$ since we have all pairs of states that differ only on the first bit.
We want to map this 3 quantum states to be the first 3 states of the computational basis so we need then to apply the transposition gate $T_{100,001}$.
In this new basis we can implement the evolution circuit for $e^{-itH}$ using the low-pass control operator resulting in the following circuit
\begin{equation*}
e^{it\sigma P_B}=T_{100,001}C_{\leq 3}\RX(t)T_{100,001}^{\dag}=
\begin{quantikz}[row sep={0.5cm,between origins},column sep=1ex]
\qw & 
    \gategroup[4,steps=5,style={dashed,rounded corners,fill=blue!10, inner xsep=-2pt},background, label style={label position=below, yshift=-0.5cm}]{
    $M$
    }
               &         &          &        &        & \gate{\RX(t)}&\gate{\RX(t)}& 
    \gategroup[4,steps=5,style={dashed,rounded corners,fill=blue!10, inner xsep=-2pt},background, label style={label position=below, yshift=-0.5cm}]{
    $M^\dag$
    }
                                                                                   &        &          &         && \qw \\
\qw &          & \ctrl{2}&\targ{}   &\ctrl{2}&        & \octrl{-1}   & \octrl{-1}  &        &\ctrl{2}&\targ{}   & \ctrl{2}&& \qw \\
\qw &          &         &\octrl{-1}&        &        & \octrl{-1}   & \ctrl{-1}   &        &        &\octrl{-1}&         && \qw \\
\qw & \gate{X} & \targ{} &\octrl{-1}&\targ{} &\gate{X}&              & \octrl{-1}  &\gate{X}&\targ{} &\octrl{-1}& \targ{} &\gate{X}& \qw 
\end{quantikz}.
\end{equation*}
An alternative is to divide $B$ into subspaces that are Pauli X-orbit generated, e.g. into $B_1 = \langle X_1, X_2 \rangle \ket{0000}$
and $B_2 =  \langle X_1 \rangle \ket{0010}$, resulting in
$P_{B_1} = \langle Z_3, Z_4 \rangle$, and 
$P_{B_2} = \langle Z_2, -Z_3, Z_4 \rangle$.
Therefore, the Hamiltonian can also be generated with
\begin{equation*}
    e^{it\sigma P_B} =
    \begin{quantikz}[row sep={0.5cm,between origins},column sep=1ex]
    & \gate{\RX(t)} & \gate{\RX(t)} & \\
    &               & \octrl{-1}    & \\
    & \octrl{-2}    & \ctrl{-1}     & \\
    & \octrl{-1}    & \octrl{-1}    & \\
    \end{quantikz}.
\end{equation*}
On the other hand, using the method from~\cite{fuchs2023LX}, we can realize $\lX$ with the time evolution of $12$ Pauli terms and equally many $\RZ$ gates.

\section{Conclusion}\label{sec:conclusion}
We have presented a constructive and resource-efficient method to implement the real-time evolution of Hamiltonians of the form $\sigma P_B$, emphasizing compactness with respect to non-transversal gates. Our approach provides significant improvements for both general and Pauli X-orbit generated subspaces, leading to practical circuits applicable across variational quantum algorithms. Notably, we demonstrate how standard quantum operations, such as fermionic excitations and constrained mixers, can be realized with reduced T-gate and ancilla requirements. These findings contribute towards more scalable implementations on fault-tolerant quantum architectures. In future work we plan to apply these methods to suitable problems.

\section{Author contributions}
Franz G. Fuchs and Ruben Pariente Bassa formulated the concept, developed the methodology, made the formal analysis and investigation, wrote the article, made the visualizations in equal parts.

\section{Acknowledgment}
We would like to thank for funding of the work by the Research Council of Norway through project number 332023.

\bibliography{references}

\end{document}